\newcommand{\norm}[1]{\left\Vert {#1}\right\Vert}
\newcommand{\E}{\mathbb{E}}
\newcommand{\supp}{\text{supp}}
\newtheorem{theorem}{Theorem}[section]
\newtheorem*{theorem*}{Theorem}
\newtheorem{lemma}[theorem]{Lemma}
\newtheorem*{lemma*}{Lemma}
\newtheorem{proposition}[theorem]{Proposition}
\newtheorem*{prop*}{Proposition}
\newtheorem{remark}[theorem]{Remark}
\newtheorem{definition}[theorem]{Definition}
\newtheorem{fact}[theorem]{Fact}
\let\c@equation\c@theorem
\newcommand{\N}[0]{\mathbb{N}}
\newcommand{\R}[0]{\mathbb{R}}
\newcommand{\poly}{\operatorname{poly}}
\newcommand{\rank}{\operatorname{rank}}
\newcommand{\Tr}[0]{\operatorname{Tr}}
\newcommand{\pull}[9]{
#1\ar@/_/[ddr]_{#2} \ar@{.>}[rd]^{#3} \ar@/^/[rrd]^{#4} & &\\
& #5\ar[r]^{#6}\ar[d]^{#8} &#7\ar[d]^{#9} \\}
\newcommand{\cmp}[9]{
\xymatrix{
#1 \ar[r]^{#4}{#5} \ar@/_2pc/[rr]^{#8}_{#9} & #2 \ar[r]^{#6}_{#7} & #3
}
}
\newcommand{\ha}[1]{\ar@{^(->}[#1]}
\newcommand{\ls}[1]{\ar@{-}[#1]}
\newcommand{\sj}[1]{\ar@{->>}[#1]}
\newcommand{\aq}[1]{\ar@{=}[#1]}
\newcommand{\acir}[1]{\ar@{}[#1]|-{\textstyle{\circlearrowright}}}
\newcommand{\acil}[1]{\ar@{}[#1]|-{\textstyle{\circlearrowleft}}}
\newcommand{\ard}[1]{\ar@{.>}[#1]}
\newcommand{\mt}[1]{\ar@{|->}[#1]}
\newcommand{\inm}[1]{\ar@{}[#1]|-{\in}}
\newcommand{\inr}{\ar@{}[d]|-{\rotatebox[origin=c]{-90}{$\in$}}}
\newcommand{\inl}{\ar@{}[u]|-{\rotatebox[origin=c]{90}{$\in$}}}
\newcommand{\beq}[1]{\begin{equation}\llabel{#1}}
\newcommand{\eeq}[0]{\end{equation}}
\newcommand{\bal}[0]{\begin{align*}}
\newcommand{\eal}[0]{\end{align*}}
\newcommand{\ban}[0]{\begin{align}}
\newcommand{\ean}[0]{\end{align}}
\newcommand{\fixme}[1]{{\color{red}#1}}
\newcommand{\llabel}[1]{\label{#1}\text{\fixme{\tiny#1}}}
\newcommand{\arxiv}[1]{\url{http://www.arxiv.org/abs/#1}}
\DeclareFontFamily{U}{wncy}{}
    \DeclareFontShape{U}{wncy}{m}{n}{<->wncyr10}{}
    \DeclareSymbolFont{mcy}{U}{wncy}{m}{n}
    \DeclareMathSymbol{\Sh}{\mathord}{mcy}{"58} 
\newcommand{\OPT}{\text{OPT}}
\newcommand*{\B}{\mathcal{B}}
\newcommand*{\matroid}{\mathcal{M}}
\newcommand*{\Sd}{\mathbb{S}_d^+}
\newcommand{\abs}[1]{|#1|}
\title{Composable Coresets for Constrained Determinant Maximization and Beyond}
\author[1]{Sepideh Mahabadi}
\author[2,3]{Thuy-Duong Vuong}
\affil[1]{Microsoft Research,\texttt{smahabadi@microsoft.com}}
\affil[2]{Computer Science Division, University of California, Berkeley, CA 94720, USA, and Miller Institute for Basic Science Research. \texttt{tdvuong@berkeley.edu}.}
\affil[3]{Department of Computer Science and Engineering, University of California, San Diego, CA 92093, USA. \texttt{thvuong@ucsd.edu} (preferred email)}
\begin{document}


\maketitle

\begin{abstract}
We study algorithms for construction of {\em composable coresets} for the task of {\em Determinant Maximization}
under {\em partition constraint}. Given a point set $V\subset \R^d$ that is partitioned into $s$ groups $V_1,\cdots, V_s$, and integers $k_1,...,k_s$, where $k=\sum_i k_i$,
the goal is to pick $k_i$ points from group $V_i$ such that the overall determinant of the picked $k$ points is maximized.
Determinant Maximization and its constrained variants have gained a lot of interest for modeling diversity, and have found applications in the context of data summarization.


When the cardinality $k$ of the selected set is greater than the dimension $d$, we show a peeling algorithm that gives us a composable coreset of size $kd$ with a provably optimal 
approximation factor of $d^{O(d)}.$ When $k\leq d$, we show a simple coreset construction with optimal size and approximation factor. As a further application of our technique, we get a composable coreset for determinant maximization under the more general laminar matroid constraints, and a composable coreset for unconstrained determinant maximization in a previously unresolved regime.

Our results generalize to
all strongly Rayleigh distributions and to several other experimental design problems.
As an application, we improve the runtime of the practical local-search based algorithm of [Anari-Vuong--COLT'22] for determinantal maximization under partition constraint from $O(n^{2^s}k^{2^s})$ to $O(n k^{2^s})$, making it only linear on the number of points $n$.

\end{abstract}


\section{Introduction}
Determinant maximization is a fundamental optimization problem that arises in various domains such as data summarization, experimental design, computational geometry, and machine learning. At its core, the problem involves selecting a subset of items, typically vectors, such that the selected set is as diverse or informative as possible. A common way to quantify this diversity is via the determinant of a submatrix derived from the selected vectors.

Before formally defining the problem, let us start by recalling the notation $\det_k$, a generalization of the standard determinant. 
\begin{definition}[$\det_k$]
Given a $d\times d$ matrix $M\in \R^{d\times d}$ and $k\leq d$, $\det_k(M)$ denotes the sum of determinants of all $k\times k$ principal submatrices of $A$:
\[\det_k (M)= \sum_{|S|=k} \det(M_{S,S}),\]
where $M_{S,S}$ is the principal sub-matrix indexed by $S\subseteq [d]$.
\end{definition}

Given a collection of $n$ vectors $V=\{v_1,\cdots,v_n\}$ in $\R^d$, and a target subset size $k$ (not necessarily smaller than $d$), the \emph{determinant maximization} problem seeks a subset $S\subseteq [n]$, $|S|=k$, that maximizes the quantity 
\[\phi(S) := \det_{\min\{k,d\}}(L(S))
\]
where $L(S):= A_S A_S^\intercal  =\sum_{i\in S} v_i v_i^{\top}\in\R^{d\times d}$ and $A_S$ is the $d\times k$ matrix whose columns are the vectors $v_i$ for $i\in S$.

Geometrically, if $k\leq d$, the objective function is equal to the volume squared of the parallelepiped spanned by the selected vectors. In this setting, the best approximation guarantee is $e^k$ \cite{nikolov2015randomized}, which is essentially tight \cite{civril2013exponential} unless P = NP. 

On the other hand, if $k>d$, 
the objective can be rewritten as $ \det (A_S^\intercal A_S )$, and problem is known as the D-optimal design problem. The best known approximation algorithm in this regime achieves a factor of $\min\{e^k , (\frac{k}{k-d})^d\}$, which is always at most $\leq e^{O(d)}$
and becomes a constant when for example $k\geq d^2$ \cite{madan2019combinatorial}.

Due to its connection with subset diversity, determinant maximization and its variants have been widely used in modern data analysis, particularly for summarization tasks where one aims to select a compact, informative, and representative subset from large-scale data, and thus studied extensively over the last decade \cite{mirzasoleiman2017streaming,gong2014diverse, kulesza2012determinantal, chao2015large, kulesza2011learning, yao2016tweet, lee2016individualness}.

\textbf{Determinant Maximization under partition and matroid constraints.} 
Diversity maximization problems, including determinant maximization, have been studied extensively under partition and more generally under matroid constraints \cite{Madan2020MaximizingDU, nikolov2016maximizing, abbassi2013diversity, moumoulidou2020diverse, addanki2022improved, mahabadi2023core}. 
These constraints are important in real-world applications where certain fairness or grouping criteria must be respected.
In the simpler case of a partition constraint, the data set $V$ is partitioned into $s$ groups $V_1,\cdots, V_s$ and we are provided with $s$ numbers $k_1,\cdots,k_s$, and the goal is to pick $k_i$ points from each group $i$ such that the overall determinant (or more generally diversity) of the chosen $k=\sum_i k_i$ points is maximized. 
This setting allows for fine-grained control over the contribution of each group in the selected summary: for example, limiting the number of movies from each genre in a recommendation system, and has further applications in the context of fair and balanced data summarization (see e.g. \cite{mahabadi2023core}).
More generally, given a matroid $([n],\mathcal{I})$ of rank $k$, the problem of finding a basis of the matroid that maximizes the determinant admits a $\min\{k^{O(k)},d^{O(d)}\}$ approximation, and it improves to $\min\set{e^{O(k)}, d^{O(d)}}$ for the {\em estimation} problem where the goal is to only estimate the value of the optimal solution \cite{Madan2020MaximizingDU, nikolov2016maximizing,BLPST22,brown2022efficient}.

\textbf{Composable Coresets.} 
As one of the main applications of determinant maximization is in data summarization, the problem has been considered extensively in massive data computation models \cite{mirzasoleiman2017streaming, wei2014fast, pan2014parallel, mirzasoleiman2013distributed, mirzasoleiman2015distributed, mirrokni2015randomized, barbosa2015power}. In this work, we focus on designing \emph{composable coreset} for determinant maximization. 
A {\em Coreset} is a {\em small subset} of the data that is sufficient for computing an approximate solution to a pre-specified optimization problem on the whole dataset \cite{agarwal2005geometric}. 
More specifically, we present a summarization algorithm $\mathcal{A}$ that, given a data set $V$, produces a subset of $V$.
Moreover, we want our coresets to be composable \cite{IMMM-ccdcm-14}: that is if we have multiple datasets $V^{(1)},\cdots, V^{(m)}$ (note that in the context of, e.g., a partition constraint, each data set $V^{(i)}$ is itself partitioned into groups $V^{(j)}_1,\cdots,V^{(j)}_s$),
then the union of coresets $\mathcal{A}(V^{(1)})\cup \cdots \cup \mathcal{A}(V^{(m)})$ should be sufficient for computing an approximate solution for the union of the datasets $V^{(1)}\cup\cdots\cup V^{(m)}$ (see \cref{sec:coreset} for a formal definition).

As shown in \cite{IMMM-ccdcm-14}, having a composable coreset for an optimization task automatically yields a solution for the same task in several massive data models including distributed/parallel and streaming models. For example, in a distributed setting where the whole data is partitioned over multiple machines, each machine can compute a coreset for its own data, and only send this small summary to a single aggregator. The aggregator then processes the union of the summaries and outputs the solution. Due to their applications, several works have focused on designing composable coresets for determinant maximization, and more broadly, diversity maximization, over the past decade \cite{IMMM-ccdcm-14, indyk2020composable, mahabadi2019composable, mirrokni2015randomized, moumoulidou2020diverse, ceccarello2018fast, ceccarello2020general, zadeh2017scalable, mahabadi2023improved, mahabadi2023core}.

\textbf{Prior work.} Composable coresets have been designed for the unconstrained determinant maximization problem. More precisely, for $k\leq d$, one can get a $k^{O(k)}$- approximate coreset of size $O(k),$ which is also known to be tight \cite{indyk2020composable, mahabadi2019composable} (see the first row of \cref{table:results}). 
Furthermore, for $k\geq d$, if the solution is allowed to pick vectors from $V$ {\em with repetition} (which we refer to as the ``with-repetition" setting), then \cite{mahabadi2019composable} gives a coreset of size $\tilde O(d)$ with approximation factor $\tilde O(d)^d$ (Second row of \cref{table:results}). However, the case where the solution is required to pick {\em distinct} points from $V$ (also known as the ``without-repetition" setting) remains open:  when $k\gg d$, \cite{mahabadi2019composable}'s size-$\tilde{O}(d)$ coreset clearly does not contain enough points to construct a solution that consists of $k$ distinct points.
Determinant maximization in the without-repetition setting is generally harder and has received significantly more interest than the with-repetition setting \cite[see][]{madan2019combinatorial,lau2021local,BLPST22}, as its solution often provides a much better summary of the original dataset \footnote{Consider the following illustrative example. Suppose $d =2$ and the data set $V$ consists of $v_1\equiv M e_1, v_2 \equiv M e_2, v_3, \dots, v_n$ where $e_1,e_2$ are standard basis vectors and $v_3, \cdots, v_n$ are arbitrary vectors in $\R^2.$ For large enough $M,$ the coreset $C$ of $V$ for $k\geq 2$ in the "with repetition" setting will only contain the 2 vectors $v_1 $ and $v_2,$ since the set consisting of $k/2$ copies of $v_1$ and of $v_2$ has the biggest possible determinant. But we cannot say that this set is a good summary of the data set since we ignore the information provided by $v_3, \dots, v_n.$ On the other hand, a coreset for $k \geq 2$ in the "without repetition" setting would necessarily consider the remaining vectors $v_3, \dots, v_n,$ and thus is a better representation of the data set. 
}.

\subsection{Our Results}
In this work, we establish the following contributions.

\textbf{Algorithms.} A summary of our coreset construction algorithms is provided in \cref{table:results}.

\begin{table}[!htb]
    \centering
    \begin{tabular}{|c|c|c|c|}
    \hline
    & size & approximation & constraint\\
    \hline
    $k\leq d$ & $k$ & $O(k)^{2k}$ & \makecell{cardinality\\ \cite{indyk2020composable, mahabadi2019composable}}   \\
    \hline
    $k\geq d$ & $\tilde O(d)$ & $\tilde O(d)^{2d}$ & \makecell{cardinality 
    \\(with-repetition)\\ \cite{indyk2020composable}}  \\
    \hline
    $k\geq d$ & $kd$ & $d^{2d}$ & \makecell{cardinality\\ (without-repetition)\\ {\color{red}This work} }  \\
    \hline
    $k \leq d$ & $sk$ & $k^{2k}$ & \makecell{partition \\{\color{red}This work}}  \\
    \hline
    $k\geq d$ & $kd$ & $d^{2d}$ &\makecell{partition\\{\color{red}This work}} \\
    \hline
    $k \leq d$ & $k^{2k}$ & $k^{2k}$ & \makecell{laminar\\{\color{red}This work}}\\
    \hline
    $k\geq d$ & $(kd)^{k}$ & $d^{2d}$ &\makecell{laminar\\{\color{red}This work}} \\
    \hline
    \end{tabular}
    \caption{\small
    Our upper bound results on composable coresets for determinant maximization. Here $s$ is the number of groups in the partition constraints. Note that the third row follows from our results on partition constraint but we spell it out to compare to the previous result.}\label{table:results}
\end{table}
\begin{itemize}
\item In \cref{sec:improve exchange}, we construct coresets for unconstrained determinant maximization in the without-repetition setting (third row in \cref{table:results}), previously left open in prior work.

\item 
 In \cref{sec:partition-laminar}, we develop efficient composable coresets for determinant maximization under {\em partition} and {\em laminar} matroid constraints, as shown in rows 4–7 of  \cref{table:results}, and verified in \cref{thm:main}. Our results extends to {\em Strongly Rayleigh} distributions (see \cref{thm:main rayleigh laminar}), and are obtained via an improved exchange inequality for determinant when $k > d.$
\item
In \cref{sec:design} (\cref{thm:design}),we demonstrate an application of our results to design composable coresets for a broader class of experimental design problems in the without-repetition setting, for all regular objective functions. This complements the result of 
\cite{indyk2020composable} for experimental design in the with-repetition setting.
\end{itemize}

\paragraph{Lower bounds.} We complement our results with the following lower bounds shown in \cref{sec:lower-bound}. 
\begin{itemize}
\item In \cref{lem:size lb in low dimension}, we show that for $k\leq d$, any composable coreset for determinant maximization under partition constraint with a finite approximation factor must have size at least $\Omega(sk)$. This shows that our construction (fourth row in \cref{table:results}) is essentially tight, since our approximation factor matches that of  \cite{indyk2020composable} for the unconstrained version of the problem, which is known to be tight.

\item 
In \cref{lem:size lb in high dimension}, we show that for $k\geq d$, any composable coreset for the problem under partition constraint
with a finite approximation factor must have size  at least $ k + d(d-1)$. This partially complements our result in the fifth row of \cref{table:results}, and shows that for $k=O(d)$, our coreset size cannot be improved.   

\item In \cref{thm:lowerbound}, we prove that for $ d\leq k \leq \poly(d)$ and coreset of polynomial size in $k$, the approximation factor of $d^{O(d)}$ is essentially the best possible for the unconstrained determinant maximization problem in the without-repetition setting. 
This matches the approximation factor of our construction in the third row of \cref{table:results}.
\end{itemize}

\paragraph{Application.} Our coreset can be constructed in essentially linear time in $n,$ the number of data points. Hence, by first constructing a coreset then applying any standard determinant maximization algorithm on the coreset, we obtain an algorithm for the determinant maximization problem under partition constraint that runs in time $O(n \poly(k))$ (see \cref{lem:linear time algo partition matroid}). Thus, if we use the  multi-step local search algorithm by \cite{anari2021sampling} for determinant maximization under partition constraint, then we obtain a practical local-search-based algorithm whose runtime improves from $O(n^{2^s}k^{2^s})$ in \cite{anari2021sampling} to $O(n k^{2^s})$ where $s$ is the number of parts in the partition. 
\subsection{Overview of the Techniques}


Let us give a brief overview of our approach.
Consider a set of vectors $V = \set{v_1, \cdots, v_n}\subseteq \R^d.$ 
As mentioned earlier, when $k\leq d$, the objective function for any subset $T \in \binom{[n]}{k}$ defined as $\phi(T)=\det_{k} (\sum_{i\in T} v_i v_i^\intercal)$ corresponds to the square of the volume spanned by the vectors in $V_T :=\set{v_i | i \in T}$.
\citet{mahabadi2019composable} showed that in this setting, any local maximum $U\subseteq V$ of size $k$ with respect to $\det_k(\cdot)$ approximately preserves the  {\em $k$-directional height} of the set $V$. More precisely, for any set $V_S\subset \R^d$ of $k$ vectors and any $v\in V_S\cap V,$ one can replace $v$ with some $u$ in $U$ so that the distance $d(u, \mathcal{H})$ from $u$ to the $(k-1)$-dimensional subspace $\mathcal{H}$ spanned by $ V_S\setminus \set*{v}$ is at least $\frac{1}{k}\cdot d(v, \mathcal{H})$. Thus
\[ k^2 \det_k(u u^\intercal + \sum_{w \in V_S\setminus \set*{v}} w w^\intercal) \geq  \det_k(\sum_{w \in V_S} w w^\intercal).\] 
Thus, all elements of $V_S$ can be successively replaced by elements of $U$ while only incurring a factor $k^{O(k)}$ increase in the objective function $\det_k (\cdot)$ and
 thus $U$ is a $k^{O(k)}$ composable coreset w.r.t $ \det(\cdot).$ 
 
 \noindent\textbf{Generalization of directional height.} In this work, we {\em extend} the notion of directional height to the regime where $k \geq d$. We show that for any local maximum $U$ of size $d$ with respect to $\det(\cdot)$, the following holds: For any index set $S\subseteq [n]$ of size $k,$ and letting $V_S = \set{v_i:i\in S}$ and $v\in V_S\cap V$, there exists $u\in U$ s.t.
\[\det(d^2 u u^\intercal + \sum_{w\in V_S\setminus \set*{v} } w  w^\intercal ) \geq \det(\sum_{w\in V_S} w w^\intercal)\footnote{This is precisely $\phi(T) = \det_{\min\{k,d\}} (\cdot) $ when $k\geq d.$} .\]

This already implies that $U$ forms a $d^{O(d)}$-composable coreset for determinant maximization in the with-repetition setting, i.e., when the selected subset is allowed to contain duplicate vectors.

\noindent\textbf{The without-repetition setting.} The without-repetition case is more delicate, as we must ensure that $ (V_S\setminus \set*{v}) \cup \set*{u}$ is a proper subset, i.e., $u\not\in  V_S\setminus \set*{v}.$ To handle this, we apply the idea of peeling coresets, previously used for constructing robust coresets that tolerate outliers \cite{agarwal2008robust, abbar2013diverse}. Our construction repeatedly peels away local optimum solutions from the input set, and takes the union of all the peeled local optimums to be the final coreset. By the pigeon hole principle, for any set $V_S$ not fully contained in the final coreset, there exists at least one peeled-away local optimum that is disjoint from $V_S$. Consequently, we can replace an element of $V_S$ by an element inside this local optimum without creating a multiset. 

\noindent\textbf{Partition and laminar constraints.} For determinant maximization under partition constraint,  our coreset construction is simple and intuitive when $k\leq d$: we take the union of the coresets for each partition part. 
When $k\geq d,$ we construct a coreset of size $kd$ with approximation factor $d^{O(d)}$ by taking the union of the peeling coresets for each part of the partition.
For the laminar matroid constraint case, we apply the peeling coreset idea to ensure that for any subset $V_S$ satisfying the laminar constraint, there exists one peeled-away subset $U$ s.t. replacing an element of $V_S$ by an element of $U$ will not violate the laminar constraint.


\noindent\textbf{Strongly Rayleigh distributions.} The fixed-size determinantal point processes (DPP), a distribution over subsets $\binom{[n]}{k}$ defined by $\P{S}\propto \det(S),$ belongs to the class of {\em Strongly Rayleigh} distributions (see \cref{subsec:strongly Rayleigh} for details). 
All of our results readily extend to maximum a posteriori problems, i.e., find $\arg \max_S \mu(S)$ for any strongly Rayleigh objective functions $\mu(\cdot)$. This is because our proof relies only on an {\em exchange inequality} which is satisfied by all strongly Rayleigh distributions. Exchange inequalities offer a unifying framework for analyzing our coreset constructions across all settings. To the best of our knowledge, this is the first work to leverage exchange inequalities in the context of coreset construction.

\noindent\textbf{Experimental design.} Our construction also applies to experimental design problems with respect to other, not necessarily strongly Rayleigh, objective functions, such as matrix traces (A-design) or condition number (E-design). 
By replacing the base-level building blocks in our construction, i.e., the local optimum w.r.t $\det(\cdot)$, with spectral spanners \cite{indyk2020composable}, we can guarantee that the union of the coresets contains a feasible fractional solution as a combination of input vectors that achieves a good value.
However, the algorithm to round the fractional solution to an integral solution under matroid constraint only exists in limited cases of objective function other than the determinant. 


\noindent\textbf{Lower bounds.}
For lower bound on the size of composable coreset for determinant maximization under partition constraint, when $k\leq d$, we show that any coreset that achieves a finite approximation factor must include at least $k$ vectors from each part of the partition. 
When $k\geq d$, we show an analogous result: any such coreset must include at least $d$ vectors from at least $d$ d parts of the partition. Finally, to prove that the approximation factor of $d^{O(d)}$ is the best possible when $\poly(d) \geq k \geq d,$ we use a similar construction as the one used in \cite{indyk2020composable}'s lower bound for the approximation factor when $k\leq d.$  
\section{Preliminaries}
Let $[n]$ denote the set $ \set*{1,\cdots, n}.$
For a set $U,$ we use $\binom{U}{k}$ to denote the family of all size-$k$ subsets of $U.$
For a set $V$ of vectors $\{v_1, \cdots, v_n\}$ and $S\subseteq [n],$ we use $V_S$ to denote the set $\set{v_i| i \in S}.$ For sets $ U,W,$ we use $U+W$ and $U-W$ to denote $U\cup W$ (union) and  $U\setminus W$ (set-exclusion) respectively. For singleton subsets, we abuse notation and write $U-e$ ($U+e$ resp.) for $ U- \set*{e}$ ($U+\set*{e}$ resp.).

For a matrix $M\in \R^{n\times n}$ and $S\subseteq [n]$,  we use $M_S$ to denote the principal submatrix of $M$ whose rows and columns are indexed by $S.$ 
We use $\Sd$ to denote the set of all symmetric positive semi-definite matrices in $ \R^{d\times d}.$

\begin{definition}[Local optima] \label{def:local optima}
For a function $\mu: \binom{[n]}{k}\to \R_{\geq 0}$ and $\zeta \geq 1$, we say $ U$ is an $\zeta$-approximate local optima of $\mu$ iff $\zeta \mu(U) \geq  \max_{e\in U, f \in [n] \setminus U}\mu (U-e+f).$

When $ \zeta = 1,$ we simply refer to $U$ as a local optima.
\end{definition}



\subsection{Matroids}
We say a family of sets $ \mathcal{B} \subseteq \binom{[n]}{k}$ is the family of bases of a matroid if $\mathcal{B}$ satisfies the basis exchange axiom: for any two bases $ B_1, B_2 \in \mathcal{B}$ and $x \in B_1 \setminus B_2,$ there exists $y \in B_2 \setminus B_1$ such that $ B_1 -x + y \in \mathcal{B}.$  We call $k$ the rank of the matroid, and $[n]$ the ground set of the matroid.
We let the family of independent sets of the matroid be $\mathcal{I} = \set*{I \in 2^{[n]} | \exists B \in \B: I \subseteq B }.$

The family $\binom{[n]}{k}$ of all size-$k$ subsets of $[n]$ forms the set of bases of the \emph{uniform matroid} of rank $k$ over $[n].$
We define two simple classes of matroids that are widely used in applications.
\begin{definition}[Partition matroid]
Given a partition of $[n]$ into $P_1, \cdots, P_s$ and integers $k_1, \cdots, k_s,$ the associated partition matroid is defined by: a set $ I \subseteq [n]$ is independent iff $ \abs{I \cap P_i} \leq k_i$, for $\forall i\in [s].$ The rank of the matroid is $k: = \sum_{i=1}^s  k_i.$ 
\end{definition}
\begin{definition}[Laminar matroid]
A family $\mathcal{F}$ of subsets of $[n]$ is laminar iff for any $ F_1, F_2 \in \mathcal{F}$ either $F_1 $ and $F_2$ are disjoint or $F_1$ contains $F_2$ or $F_2$ contains $F_1.$  Given a laminar family $\mathcal{F}$ and integers $k_F$ for each $F\in \mathcal{F} ,$ the associated laminar matroid is defined by: a set $I \subseteq [n]$ is independent iff $\abs{I \cap F} \leq k_F$ for $\forall F \in \mathcal{F}.$ The maximal independent sets have the same cardinality $k,$ and they form the bases of the laminar matroid.

We assume that $ k_F > 0$, for $\forall F \in \mathcal{F},$ otherwise we can remove the set $F$ from $\mathcal{F}$ and all elements in $F$ from the ground set.
For two sets $F_1, F_2$ in $\mathcal{F}$ s.t. $ F_1 \subseteq F_2,$ we can assume $k_{F_2} > k_{F_1},$ otherwise the constraint on $I \cap F_1$ is redundant and $F_1$ can be removed from the laminar family. We call such a laminar family \emph{non-redundant}.
\end{definition}

\subsection{Determinant Maximization and Experimental Design Problems}
Given vectors $v_1, \cdots, v_n\in \R^d$ and a matroid $\mathcal{M}=([n], I),$ determinantal point processes (DPP) under matroid constraint samples a basis $S\subseteq [n]$ of $\mathcal{M}$ such that 
\[\P{S} \sim \det_{\min \{k,d\}} (\sum_{i\in S}  v_i v_i^\intercal).\]
This distribution favors diversity, since sets of vectors that are more linearly independent (i.e., different from each other) are assigned higher probabilities. The fundamental optimization problem associated with DPPs, and probabilistic model in general, is to find a "most diverse" subset by computing $\arg\max_{S \text{ is a basis of } \mathcal{M}} \P{S}$ i.e. solving the maximum a posteriori (MAP) inference problem. 

When the matroid is the uniform matroid, we simply refer to the problem as the {\em determinant maximization} problem.

When $k \leq d$, $\P{S}$ is proportional to the squared volume of the parallelepiped spanned by the elements of $S$. T  hus MAP-inference for DPPs is also known as the volume maximization problem. 


Determinant maximization is also known as the $D$-design problem, since the objective function is the (D)eterminant. Other objective functions have also been studied, for example, matrix traces (A-desing) or condition number (E-design). We discuss these different objective functions in more details in \cref{sec:design}. 

The setting where $S$ is allowed to be a multiset has also been studied. 
This is known as the experimental design problem \emph{with-repetition} \cite{roundingExperimentalDesign,madan2019combinatorial}, as opposed to the \emph{without-repetition} setting where $S$ needs to be a proper subset. The with-repetition setting is generally easier: it can be reduced to the without-repetition setting by duplicating each vector $k$ times.

\subsection{Composable Coresets}\label{sec:coreset}
In the context of the optimization problem on $\mu: \binom{[n]}{k}\to \R_{\geq 0}$, a function $c$ that maps any set $V\subseteq [n]$
to one of its subsets is called an $\alpha$-composable coreset (\cite{IMMM-ccdcm-14}) if it
satisfies the following condition: given any 
collection of $m$ sets $V^{(1)}, \cdots , V^{(m)} \subseteq [n]$
\begin{align*}
  &\alpha \cdot \max\set*{\mu(S) | S \subseteq \bigcup_{i=1}^m c(V^{(i)})} 
\geq \max \set*{\mu(S) | S \subseteq \bigcup_{i=1}^m V^{(i)} }  
\end{align*}
If we are further given a matroid $\mathcal{M}=([n], I)$ to satisfy, we additionally require $S$ to be a basis of $\mathcal{M}$ in both sides of the above inequality.
Finally, we say that $c$ is a coreset of {\em size} $ t$ if $|c(V)| \leq t$ for all sets $V$. 
Composable coresets are very versatile; once a composable coreset is designed for a task, it automatically imples
efficient streaming and distributed algorithms for the same task.

\subsection{Directional Height}
For this subsection let $k\leq d$. 
\begin{definition}[Directional height and $k$-directional height \cite{mahabadi2019composable}]
For a set $V\subseteq \R^d$ of vectors and a unit vector $x,$ the directional height of $V$ w.r.t $x$ is $h(V,x) = \max_{v\in V} \abs{\langle v, x \rangle}.$

The $k$-directional height of $V$ w.r.t a $(k-1)$-dimensional subspace $H$ is $d(V, H)  = \max_{v\in V, x\in H^{\top}} \abs{\langle v, x \rangle}$ where $H^{\top}$ is the $(d-k+1)$-dimensional subspace perpendicular to $H.$
\end{definition}

\begin{theorem}[Coreset for $k$-directional height \cite{mahabadi2019composable}]\label{thm:preserve k directional height}
Let $k \leq d$ and $V\subseteq\R^d$. Then any size $k$ local optimum $U$ w.r.t $\det(\cdot)$ inside $V$ approximately preserves the $k$-directional height. That is, for any $(k-1)$-dimensional subspace $H$
\[d(U, H) \geq \frac{1}{k} d(V, H).\]
where for a point set $P$, we define $d(P,H)=\max_{p\in P} d(p,H)$.
\end{theorem}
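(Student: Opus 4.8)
The plan is to fix a $(k-1)$-dimensional subspace $H$ and choose $v^* \in V$ attaining $d(V,H) = d(v^*,H)$; the goal is to produce $u \in U$ with $d(u,H) \ge \tfrac1k d(v^*,H)$. If $v^* \in U$ this is immediate, so assume $v^* \in V \setminus U$. I would also assume the optimal $\det(\cdot)$ value is positive — otherwise all of $V$ lies in a $(k-1)$-flat and the claim is trivial there — so that $U = \{u_1,\dots,u_k\}$ is linearly independent. Since $U$ is a single-swap local optimum of $\det(\cdot)$ and $v^*\notin U$, for every $i$ we have $\mathrm{vol}(U) \ge \mathrm{vol}(U - u_i + v^*)$, where $\mathrm{vol}$ denotes the $k$-dimensional volume (the square root of $\det(\cdot)$). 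Writing $W_i := \mathrm{span}(U\setminus u_i)$ and using the base-times-height identity $\mathrm{vol}(U) = \mathrm{vol}(U\setminus u_i)\cdot d(u_i,W_i)$ (with $d(\cdot,W_i)$ the Euclidean distance to the $(k-1)$-dimensional subspace $W_i$), this rewrites as $d(u_i,W_i) \ge d(v^*,W_i)$ for every $i$.

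Next I would extract quantitative content from these swap inequalities. Decompose $v^* = \sum_i c_i u_i + w$ with $w \perp \mathrm{span}(U)$. Projecting onto $W_i^{\perp} = \mathrm{span}(g_i)\oplus \mathrm{span}(U)^{\perp}$, where $g_i$ is the unit normal of $W_i$ inside $\mathrm{span}(U)$, and using $\langle u_\ell,g_i\rangle = 0$ for $\ell\neq i$, one gets $d(v^*,W_i)^2 = c_i^2\,d(u_i,W_i)^2 + \|w\|^2$, so $d(u_i,W_i)\ge d(v^*,W_i)$ is equivalent to $(1-c_i^2)\,d(u_i,W_i)^2 \ge \|w\|^2$. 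In particular $|c_i|\le 1$ for all $i$, and $\|w\| \le \min_i d(u_i,W_i)$. I would then establish a purely geometric lemma controlling $\min_i d(u_i,W_i)$: the $k$ projections $P_H u_1,\dots,P_H u_k$ lie in the $(k-1)$-dimensional space $H$ and hence are linearly dependent, so there is a nontrivial relation $\sum_i a_i P_H u_i = 0$ that we normalize to $\max_i |a_i| = 1$, attained at some index $j$. Then $z := \sum_i a_i u_i = \sum_i a_i P_{H^{\perp}}u_i$ lies in $H^{\perp}$ with $\|z\| \le \sum_i |a_i|\,d(u_i,H) \le k\, d(U,H)$, and since $u_j \equiv z \pmod{W_j}$ we get $d(u_j,W_j) = \mathrm{dist}(z,W_j) \le \|z\| \le k\, d(U,H)$; therefore $\|w\| \le \min_i d(u_i,W_i) \le k\, d(U,H)$.

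Finally I would combine everything via $d(v^*,H) = \|P_{H^{\perp}}v^*\| \le \sum_i |c_i|\,d(u_i,H) + \|w\|$: the sum is $\le k\,d(U,H)$ since each $|c_i|\le 1$, and $\|w\|$ is also $\le k\,d(U,H)$ by the previous step, so in all cases $d(v^*,H) = O(k)\,d(U,H)$, and when $v^*\in\mathrm{span}(U)$ (so $w=0$) this already gives the sharp bound $d(v^*,H)\le k\,d(U,H)$. \textbf{The main obstacle} I expect is exactly the general case: to recover the exact factor $k$ rather than $O(k)$ one must argue that the constraints $(1-c_i^2)\,d(u_i,W_i)^2 \ge \|w\|^2$ prevent the two contributions $\sum_i|c_i|\,d(u_i,H)$ and $\|w\|$ from being simultaneously maximal, i.e. carry out a joint optimization of these inequalities to absorb the orthogonal component $w$ of $v^*$ without extra loss. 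Everything else is routine once the swap inequalities and the dependence-of-projections lemma are in hand.
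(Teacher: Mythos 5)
The paper does not prove this theorem; it imports it verbatim from \cite{mahabadi2019composable} with a citation only, so there is no in-paper argument to compare you against. On its own merits your proof is structurally sound and correctly executed up to the constant: the swap inequalities do reduce to $(1-c_i^2)\,d(u_i,W_i)^2 \ge \|w\|^2$, the linear dependence of $\{P_H u_i\}_{i}$ inside the $(k-1)$-dimensional $H$ does produce a $z\in H^\perp$ with $\|z\|\le k\,d(U,H)$ and $d(u_j,W_j)\le\|z\|$, and the triangle inequality then gives $d(v^*,H)\le 2k\,d(U,H)$. The gap you flag is real: you land on $\tfrac{1}{2k}$ rather than the stated $\tfrac{1}{k}$. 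Bounding $\sum_i|c_i|\,d(u_i,H)$ and $\|w\|$ by separate worst cases is exactly what loses the extra factor, since the constraints $(1-c_i^2)\,d(u_i,W_i)^2\ge\|w\|^2$ couple each $|c_i|$ to $\|w\|$; a joint optimization (for instance, applying Cauchy--Schwarz against the vector $(d(u_i,H))_i$ instead of the crude $\ell_\infty$-$\ell_1$ split) is what is needed to absorb the orthogonal component without loss. That said, none of the downstream results in this paper depend on the constant --- they only need $O(k)$, which already yields the $k^{O(k)}$ coreset guarantee --- and the companion display quoted in the overview, $k\det(uu^\intercal+\sum_{w\in S\setminus\{v\}}ww^\intercal)\ge\det(\sum_{w\in S}ww^\intercal)$, actually corresponds to a $\tfrac{1}{\sqrt{k}}$ height bound, so the stated $\tfrac1k$ may itself be quoted loosely.

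One secondary issue: your claim that ``otherwise all of $V$ lies in a $(k-1)$-flat and the claim is trivial there'' is not correct. If the span of $V$ has dimension at most $k-1$, then $\det\equiv 0$ on every size-$k$ subset, so \emph{every} size-$k$ subset of $V$ is vacuously a local optimum, and nothing prevents choosing $U$ to consist of short vectors while $V$ contains a much longer one; in that situation $d(U,H)$ can be arbitrarily smaller than $d(V,H)$. The theorem implicitly assumes the nondegenerate case (which is how it is invoked throughout), and you should state that as an assumption rather than claim triviality.
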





\subsection{Strongly Rayleigh Distribution and Exchange Inequalities} \label{subsec:strongly Rayleigh}
Let $\nu: \binom{[n]}{\ell} \to \R_{\geq 0},$ be a distribution over size-$\ell$ subsets of $[n]$. Its generating polynomial is defined as
\[g_{\mu}(z_1, \cdots, z_n) = \sum_{S \in \binom{[n]}{\ell}} \nu(S) \prod_{i\in S} z_i.\]

\begin{definition}[Strongly Rayleigh]
A distribution $\nu: \binom{[n]}{\ell} \to \R_{\geq 0}$ is {\em strongly Rayleigh} (or {\em real-stable}) if its generating polynomial $g_{\nu}$ has no roots in the upper-half of the complex plane. That is,
$g_{\nu}(z_1, \cdots, z_n) \neq 0$ whenever $ \Im(z_i) > 0$ for $\forall i. $
\end{definition}


Strongly Rayleigh distributions satisfy the following {\em exchange inequality}, which implies that for any local optimum subset $U$ (w.r.t a 
strongly Rayleigh distribution $\nu,$), and any set $W\in \supp(\nu),$ we can replace an element of $W$ for an element of $U$ while approximately preserving the value of $\nu(\cdot).$ This property will later be used to show that $U$ can serve as a coreset by successively replacing all elements of $W$ with elements of $U$ without significantly losing the objective value.
\begin{lemma}{Exchange inequality \cite[Lemma 26]{ALOVV21} } \label{lem:exchange inequality}
Let $\nu: \binom{[n]}{\ell} \to \R_{\geq 0}$ be a strongly Rayleigh distribution.
Let $V \subseteq [n]$ be an arbitrary subset. For $ \zeta \geq 1$, let $U\subseteq V$ be a $\zeta$-local optimum w.r.t. $\nu$, and assume $\nu(U) \neq 0.$
Then for any $W \in  \binom{[n]}{\ell},and $ $e \in W \setminus U$
\[\nu(W) \nu(U) \leq \ell\cdot \sum_{j\in U\setminus W} \nu(W - e+j) \nu(U+e-j)\]
In particular, if $ e \in V$, then by approximate local optimality of $U$ within $V$, we have
\[\nu(W) \leq \zeta \ell \cdot \sum_{j\in U} \nu(W - e+j) \leq(\zeta\ell)^2 \cdot \max_{j\in U} \nu(W-e+j)\]
where we implicitly understand that if $j \in W-e$ then $W-e+j$ is not a proper set, and $\nu(W-e+j) = 0.$
\end{lemma}
In the context of determinant maximization with given input vectors $v_1, \cdots, v_n \in \R^d$, for any $k\leq d$, $\nu(S) = \det_k(\sum_{i\in S} v_i v_i^\intercal)$ defines a strongly Rayleigh distribution over subsets $S\in \binom{[n]}{k}$ \cite{BBL09,anari2016monte}.

\section{Unconstrained Case: the Peeling Coreset} \label{sec:improve exchange}
As mentioned in the overview of the techniques, if $U$ is a composable coreset for $V$ with respect to the function $\mu(\cdot)$, 
  then for any set $S,$ we can replace any element in $S\cap V$ with an element of $U$ while not significantly reducing $\mu(\cdot)$. We formalize this intuition with the following definition. 
\begin{definition}[Value-preserving set]\label{def:value-preserving}
Given $\tilde{\mu}: [n]^k\to \R_{\geq 0}$ and $V \subseteq [n],$ we say $ U \subseteq V$ is \emph{value-preserving} with respect to $\tilde{\mu}$ if for any $S\in  \binom{[n]}{k} $ and $e \in S \cap V,$ there exists $f \in U \setminus ( S-e)$ s.t. $\tilde{\mu}(S ) \leq \tilde{\mu}(S - e+ f).$
\end{definition}
The following lemma shows the relationship between value-preserving sets and composable coresets.
\begin{lemma} \label{lem:value preserving to coreset}
Suppose functions $\mu ,\tilde{\mu}: [n]^k\to \R_{\geq 0}$ satisfy that $\mu(S)\leq \tilde{\mu}(S) \leq \alpha \mu(S)$ for all $S,$ and let $c$ be a coreset map $c$ such that $U:=c(V)\subseteq V$ is \emph{value-preserving} with respect to $\tilde{\mu}$. Then $c$ gives an $\alpha$-composable coreset with respect to $\mu.$
\end{lemma}
\begin{proof}
Consider a collection of datasets $V^{(1)}, \cdots, V^{(m)} ,$ let $U_i: = c(V^{(i)})$ for $\forall i\in [m],$ and let $S$ be an arbitrary size-$k$ subset of $\bigcup_{i=1}^m V^{(i)}.$ Since each $U_i$ is value-preserving w.r.t. $\tilde{\mu},$ for any $ e\in S\cap V^{(i)},$ we can replace $e$ with $ f\in U_i \setminus (S-e)$ while keeping $\tilde{\mu}(\cdot)$ non-decreasing. Thus, we successively replace  $E:=S\setminus \bigcup_{i=1}^m U_i$ with  $L \subseteq \bigcup_{i=1}^m U_i$ while ensuring that
\[\tilde{\mu}(S) \leq \tilde{\mu}(S- E +L)\]
Moreover, $S-E +L \in \bigcup_{i=1}^m U_i$ and \[\mu(S) \leq \tilde{\mu}(S) \leq \tilde{\mu}(S- E +L) \leq \alpha \mu(S-E+L). \]
By choosing $S$ such that $\mu(S) = \max\set*{\mu(S') | S' \subseteq \bigcup_{i=1}^m V^{(i)}, |S'|=k}$ we get the desired conclusion.
\end{proof}

We now show that an (approximate) local optima with respect to $\det(\cdot)$ is value preserving for suitably chosen functions. When $ k \leq d,$ \cite{mahabadi2019composable} shows that any size-$k$ local optimum $U$ with respect to $\det (\cdot )$ approximately preserves $k$-directional height (see \cref{thm:preserve k directional height}), and hence is a value-preserving set with respect to $\tilde{\mu},$ where $\tilde{\mu}$ is defined by 
\[\tilde{\mu}(S) = \det_k(\sum_{i\in S} k^{2\times \mathbf{1}[i \in U]} v_i v_i^\intercal )  \]
This is easy to see since for $\abs{S}=k\leq d,$ as $\mu(S)= \det_k(\sum_{v\in S} v v^\intercal)$ is precisely the square of the volume spanned by vectors in $S$, i.e., $\mu(S) = \text{Vol}^2(\set*{v_i | i\in S})$ and $ \tilde{\mu}(S) =k^{2\abs{U\cap S} } \text{Vol}^2(\set*{v_i|i\in S}). $

Below, we show that for $k \geq d,$ a size-$d$ local optimum $U$ is value-preserving with respect to $\tilde{\mu}$ defined by
\begin{equation} \label{eq:modified determinant}
\begin{split}
    \tilde{\mu}(S) = \det(\sum_{i\in S} d^{2 \times \mathbf{1}[i \in U]} v_i v_i^\intercal ) 
    =\sum_{W \in \binom{S}{d}} d^{2\abs{W \cap U} } \det(\sum_{i\in W} v_i v_i^{\intercal}) .
\end{split}
\end{equation}
where the second equality is due to Cauchy-Binet's formula, i.e.,  $k \geq d$ and $\abs{S} = k$.
\[ \mu(S) = \det(\sum_{i\in S} v_i v_i^\intercal) = \sum_{W \in \binom{S}{d}} \det(\sum_{i\in W} v_i v_i^{\intercal})\]
We can generalize this setting by considering $\nu: \binom{[n]}{\ell} \to \R_{\geq 0}$\footnote{For now think of $\ell$ as being equal to $d$, but more generally we are introducing parameter $\ell$ to unify the two cases of $k\leq d$ and $d\leq k$.} and $\mu: \binom{[n]}{k} \to \R_{\geq 0}$ by 
\begin{equation}\label{eq:low rank rayleigh}
    \mu(S) = \sum_{W \in \binom{S}{\ell}} \nu(W).
\end{equation}
We will assume that $\nu$ is strongly Rayleigh, and consequently satisfies the exchange inequality in \cref{lem:exchange inequality}.
Because \cref{lem:exchange inequality} only applies to local optima with $\nu(S)>0$, it is more convenient to work with a full-support distribution, i.e., $\nu(S) > 0$ for all $S\in \binom{[n]}{\ell}.$ Fortunately, we can approximate any strongly Rayleigh distribution $\nu$ with a full-support strongly Rayleigh distribution a.k.a. strictly real-stable distribution. In other words, for any $\epsilon > 0$, there exists a strongly Rayleigh distribution $\tilde{\nu}: \binom{[n]}{\ell} \to \R_{\geq 0}$ such that for all $S$, $\tilde{\nu} (S) > 0$ and $\abs{\tilde{\nu}(S) - \nu(S)}\leq \epsilon.$ Moreover, $\tilde{\nu}$ can be efficiently computed given $\nu$ (see the main theorem of \cite{Nuij1968ANO}, \cite[Proof of Proposition 2.2]{BH19} and \cite[page 7]{Branden21}). 
\begin{proposition} \label{prop:reduce to full support case}
Let $\nu: \binom{[n]}{\ell} \to \R_{\geq 0}$ be strongly Rayleigh. For any $\epsilon > 0$, there exists strongly Rayleigh $\tilde{\nu}:\binom{[n]}{\ell} \to \R_{> 0}$ such that $\abs{\nu(S)-\tilde{\nu}(S)}\leq \epsilon.$ for all $S$.
\end{proposition}
\begin{proof}[Proof of \cref{prop:reduce to full support case}]
By \cite{Nuij1968ANO}, for $i, j\in [n]$ and $s\in \R_{\geq 0},$ the following operator preserves strongly Rayleigh/real-stability of polynomials 
\[T_{i,j,s} g= g +s z_j \frac{\partial g}{\partial i} \]
for $ g\in \R[z_1,\dots, z_n].$

If $\nu(S) = 0$, for $\forall S$, then we can let $\tilde{\nu}(S) = \epsilon$ for $\forall S.$ W.l.o.g. assume $\nu(S) \neq 0$ for $S = \set*{1,\dots, \ell}.$
Let $g$ be the generating polynomial for $\nu$ and let
\begin{align*}
f = &\prod_{i\in [n], j\in [n]}T_{i,j,s} g = T_{1,1,s} \circ \dots \circ T_{1,n} \circ T_{2,1,s} \circ \dots T_{2,n,s} \circ \dots \circ T_{n,n,s} g
\end{align*}
then $f$ is strongly Rayleigh.
It is easy to see that for small enough $s,$ $f$ approximates $g.$ One practical choice for $s$ is $s = \epsilon (\sum_S \nu(S))^{-c}$ for some $c > 1;$ computing the partition function $\sum_S \nu(S)$ can be done efficiently and even in $\tilde{O}(1)$-parallel time for distributions of interest e.g. DPP. The map from $f$ to the multi-affine part 
$f^{MAP} $ of $f$ preserves real stability \cite{BBL09} (recall that for  $f(z_1,\dots, z_n) =\sum_{(\alpha_i)_{i=1}^n\in \N^n} c(\vec{\alpha}) \prod_{i=1}^n z_i^{\alpha_i},$ the multiaffine part of $f$ is 
$$f^{MAP}(z_1, \dots, z_n) = \sum_{(\alpha_i)_{i=1}^n \in {0,1}^n} c(\vec{\alpha}) \prod_{i=1}^n z_i^{\alpha_i}.$$ Now we only need to check that $f^{MAP}$ has positive coefficients.
Indeed, for $S=\set*{i_1, \dots, i_\ell}$ consider the coefficient of the monomial $z^S = \prod_{i\in S} z_i$ in $f$ and $f^{MAP}:$ it is a sum which includes the term
\[\prod_{j =1}^{\ell}(s z_{i_j} \frac{\partial} {\partial z_j}) \nu([\ell]) z^{[\ell]} = z^S s^k\nu([\ell]) > 0  \]
thus the coefficient of $z^S$ in $f^{MAP}$ is positive.
\end{proof}

\begin{remark} \label{remark:algorithm to find local opt}
We remark that a $O(1)$-approximate local optima of size $\ell$ w.r.t $\det(\cdot)$ can be found in time $O(n \cdot \poly(\ell)^4 ) $ using a combination of simple heuristics such as greedy and local search (also known as Fedorov exchange algorithm). The same algorithmic result holds more generally for all strongly Rayleigh distributions $ \nu:\binom{[n]}{\ell}\to \R_{\geq 0}$ \cite[see][for details]{anari2021sampling,ALOVV21}.
\end{remark}

In the remainder of the paper, we will consider the problem of maximizing the function $\mu:\binom{[n]}{k} \to \R_{\geq 0}$ under a matroid constraint where $\mu(S) = \sum_{W \in \binom{S}{\ell}} \nu(W)$ and $\nu: \binom{[n]}{\ell} \to \R_{> 0}.$ We assume $\nu$ is strongly Rayleigh, which implies that $\mu$ is also strongly Rayleigh.
\begin{proposition} \label{prop:sum strongly Rayleigh}
For $\nu: \binom{[n]}{\ell} \to \R_{\geq 0}$ being strongly Rayleigh and $ k \geq \ell,$ define   $\mu:\binom{[n]}{k} \to \R_{\geq 0} $ by \[\mu(S) = \sum_{W\in \binom{S}{\ell}} \nu(W)\] then  $\mu$ is also strongly Rayleigh.
\end{proposition}
\begin{proof}
Consider the elementary symmetric polynomial
\[e_{k-\ell}(z_1, \cdots, z_n) = \sum_{L \in \binom{[n]}{k-\ell}} \prod_{i\in L} z_i\]
then $e_{k-\ell}$ is real stable, i.e., has no roots in the upper half plane. Since the same is true for $g_{\nu}$, the product $g_{\nu}\cdot e_{k-\ell}$ also has no roots in the upper half plane \cite[see e.g.][Proposition 3.1 for a proof]{BBL09}. Consider the linear map $\varphi: \R[z_1,\cdots, z_n] \to \R[z_1, \cdots, z_n]$ that maps the monomial $z_1^{\alpha_1} \cdots z_n^{\alpha}$ to itself if $\alpha_i\leq 1$ for all $i,$ and to $0$ otherwise. This map preserves real-stability of polynomial \cite{Borcea_2009}, and $\varphi(g_{\nu} e_{k-\ell} ) = g_{\mu},$ thus $g_{\mu}$ is real stable, and $\mu$ is strongly Rayleigh. 
\end{proof}

\begin{remark}
This set-up of $\mu$ encompasses determinant maximization for both cases of $k\leq d$ and $k \geq d$. More concretely, for the former case, we set $\ell=k$, and for the latter case, we set $\ell=d$. We will explain this in more detail in \cref{thm:main}.

\end{remark}

For some constant $\zeta \geq 1 $ to be specified later, let $\phi(W) = (\zeta\ell)^{2 \abs{W \cap U} }.$ We define $\tilde{\mu}:[n]^k \to \R_{\geq 0}$ by:
\begin{equation}\label{eq:mu tilde}
    \tilde{\mu}(S) = \sum_{W \in \binom{S}{\ell}} \phi(W) \nu(W)
\end{equation}

This is the proper generalization of \cref{eq:modified determinant}. We observe the following simple fact.
\begin{fact} \label{fact:bound}
\[\mu(S) \leq \tilde{\mu}(S) \leq (\zeta \ell)^{2\ell} \mu(S) .\]
\end{fact}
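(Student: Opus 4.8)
The statement to prove is \cref{fact:bound}:
\[\mu(S) \leq \tilde{\mu}(S) \leq (\zeta \ell)^{2\ell} \mu(S).\]

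The plan is straightforward. Recall that $\mu(S) = \sum_{W \in \binom{S}{\ell}} \nu(W)$ and $\tilde{\mu}(S) = \sum_{W \in \binom{S}{\ell}} \phi(W) \nu(W)$, where $\phi(W) = (\zeta\ell)^{2|W \cap U|}$. Since $\nu(W) > 0$ (recall we arranged $\nu$ to be full support), it suffices to bound $\phi(W)$ pointwise for each $W \in \binom{S}{\ell}$: we have $1 \leq \phi(W) \leq (\zeta\ell)^{2\ell}$. The lower bound $\phi(W) \geq 1$ holds because $|W \cap U| \geq 0$ and $\zeta\ell \geq 1$. The upper bound holds because $|W \cap U| \leq |W| = \ell$, so $\phi(W) = (\zeta\ell)^{2|W\cap U|} \leq (\zeta\ell)^{2\ell}$. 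Summing these inequalities against the nonnegative weights $\nu(W)$ over all $W \in \binom{S}{\ell}$ gives the claim.

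There is essentially no obstacle here — this is a routine sandwiching argument. The only point worth noting is that we rely on $\nu \geq 0$ (in fact the paper has ensured $\nu > 0$), so that multiplying the pointwise bounds on $\phi(W)$ by $\nu(W)$ preserves the inequalities, and on $\zeta \geq 1$ and $\ell \geq 1$ so that $\zeta\ell \geq 1$ and the exponentiation is monotone in the exponent.

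\begin{proof}
Since $\nu(W) > 0$ for all $W$, it suffices to show that $1 \leq \phi(W) \leq (\zeta\ell)^{2\ell}$ for every $W \in \binom{S}{\ell}$ and then sum against the weights $\nu(W)$. Write $\phi(W) = (\zeta\ell)^{2|W\cap U|}$. Since $\zeta \geq 1$ and $\ell \geq 1$, we have $\zeta\ell \geq 1$, so $(\zeta\ell)^{2|W\cap U|}$ is nondecreasing in $|W\cap U|$. As $0 \leq |W \cap U| \leq |W| = \ell$, this yields $(\zeta\ell)^{0} \leq \phi(W) \leq (\zeta\ell)^{2\ell}$, i.e. $1 \leq \phi(W) \leq (\zeta\ell)^{2\ell}$. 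Multiplying by $\nu(W) \geq 0$ and summing over $W \in \binom{S}{\ell}$ gives
\[
\mu(S) = \sum_{W \in \binom{S}{\ell}} \nu(W) \leq \sum_{W \in \binom{S}{\ell}} \phi(W)\nu(W) = \tilde{\mu}(S) \leq (\zeta\ell)^{2\ell} \sum_{W \in \binom{S}{\ell}} \nu(W) = (\zeta\ell)^{2\ell}\mu(S),
\]
as claimed.
\end{proof}
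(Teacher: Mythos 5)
Your proof is correct and is exactly the routine sandwiching argument the paper implicitly intends — the paper states this as a ``simple fact'' with no explicit proof, and your pointwise bound $1 \leq \phi(W) = (\zeta\ell)^{2|W\cap U|} \leq (\zeta\ell)^{2\ell}$ followed by summation against $\nu(W) \geq 0$ is the obvious justification. Nothing further is needed.
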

\begin{lemma} \label{lem:smart exchange}
Let $V\subseteq [n]$ and let $U$ be a $\zeta$-approximate local optimum inside $V$ with respect to $\nu,$ for $\zeta =O(1).$ 
Then for any $e \in (V\cap S)\setminus U,$ there exists $f \in U$ s.t.
\[\tilde{\mu}(S) \leq \tilde{\mu} (S - e + f).\]
\end{lemma}
\begin{proof}
Consider $W \in\binom{S}{\ell}$ with $ e\in W.$ Using \cref{lem:exchange inequality} with $\ell' =\zeta\ell,$ we have 
\[\nu(W) \leq \ell' \sum_{f\in U} \nu(W - e+f) \]
Summing over all such $W$, we get
\begin{align*}
  &\sum_{W\in \binom{S}{\ell}: e\in W}\phi(W) \nu(W) \\
  \leq &\ell' \sum_{f \in U} \sum_{W: e\in W}\phi(W) \nu(W-e+f)\\
  \leq &(\zeta \ell)^2 \max_{f\in U}  \sum_{W\in\binom{S}{\ell}:e\in W} \phi(W) \nu(W-e+f) \\
  =&\sum_{W\in\binom{S}{\ell}:e\in W} \phi(W-e+f^*)\nu(W-e+f^*)     
\end{align*}
 with $f^*$ being the maximizer of the second line. Finally, adding $\sum_{W\in \binom{S-e}{\ell}}\phi(W) \nu (W)$ to both sides gives the desired inequality. 
\end{proof}
\subsection{The Algorithm} We have just shown how to exchange $e \in  (S \cap V)\setminus U$ for $f \in U$ while keeping $\tilde{\mu}$ non-decreasing. However, we still need to ensure that $ S-e+f$ is a proper set, i.e., ensure that $f \not\in S-e.$ To achieve this, we need a slightly more elaborate coreset construction.


\begin{definition}[Peeling coreset]
\label{def:threshold coreset}
Given $V \subseteq [n],$ and a number $ k_V \geq 1$, define the $(V,k_V, \zeta)$-peeling coreset $U $ as follows:

\begin{itemize}
\item Let $ U_0 = \emptyset .$ For $i =1, \cdots, k_V,$ let $ V_i:= V \setminus \bigcup_{j=0}^{i-1} U_j$, and let $U_i\subseteq V_i$ be a $\zeta$-approximate local optimum w.r.t. $\nu$ inside $V_i.$
    \item Let $ U = \bigcup_{i=1}^{k_V} U_i.$
\end{itemize}
Note that the $U_i$'s are disjoint and $ \abs{U} \leq k_V \ell.$

\end{definition}
\begin{lemma} \label{lem:non repetitive exchange}
The $(V, k_V)$-peeling coreset $U$ constructed in \cref{def:threshold coreset} is a value-preserving subset of $V$ with respect to $ \hat{\mu}: [n]^k \to \R_{\geq 0}$ defined as \[\hat{\mu}(S) = \textbf{1}[S\in \binom{[n]}{k}\land \abs{S \cap V}\leq k_V] \tilde{\mu}(S)\]
where $\tilde{\mu}$ is as defined in \cref{eq:mu tilde}.

\end{lemma}
\begin{proof}
Fix $ S\in \binom{[n]}{k}$ such that $\abs{S\cap V} \leq k_V$ 
and $e\in (S\cap V)\setminus U.$ Since $S$ has at most $ k_V-1$ elements inside\footnote{$\abs{S \cap U}\leq \abs{(S \cap V)\setminus e} \leq k_V-1.$} $U = \bigcup_{j=1}^{k_V} U_j$,  there exists some $j \in [k_V]$ such that $ S \cap U_j = \emptyset.$ Note that $e \in (S\cap V)\setminus U\subseteq (S\cap V_j)\setminus U_j, $. Thus, there exists $ f \in U_j$ such that $\tilde{\mu}(S) \leq \tilde{\mu}(S-e+f).$ Since $S \cap U_j = \emptyset,$ we are guaranteed that $f$ is not in $S-e.$
\end{proof}

\begin{lemma} \label{lem:low dimension} Let $\mu:\binom{[n]}{k} \to \R_{\geq 0}$ be strongly Rayleigh. For $ V\subseteq [n]$ and $\zeta \geq 1$, the $\zeta$-approximate local optimum $U$  w.r.t $\mu$ is a value-preserving subset of $V$ w.r.t $\hat{\mu}$ defined by
$\hat{\mu}(S) = (\zeta k)^{2\abs{U \cap S} } \mu(S).$
\end{lemma}
\begin{proof}
We use the fact that $\mu$ is strongly Rayleigh, and \cref{lem:exchange inequality}. For any $S \in \binom{[n]}{k}$ and $ e\in (S\cap V)\setminus U,$ there exists $j \in U \setminus S$ s.t.
\[\mu(S) \leq (\zeta k)^2 \mu(S-e+j) \]
Multiplying both sides by $ (\zeta  k)^{2\abs{U \cap S}}$ and using the fact that $\abs{(S-e+j) \cap U} = \abs{S \cap U} + 1$ we have 
\[\hat{\mu}(S) \leq \hat{\mu}(S-e+j)\]
\end{proof}
\begin{lemma}[Composability of value-preserving subsets] \label{lem:composable value}
Consider datasets $V^{(1)}, \cdots, V^{(m)}$ with $U_i$ being a value preserving subset of $V^{(i)}$ w.r.t $\tilde{\mu}$. Then $ U : = \bigcup_{i=1}^m U_i$ is a value-preserving subset of $V: = \bigcup_{i=1}^m V^{(i)}$ w.r.t $\tilde{\mu}.$ 
\end{lemma}
\begin{proof}
Consider $ S \subseteq [n]$ and $ e\in (S\cap V) \setminus U.$ Clearly, $ e\in  (S\cap V_i) \setminus U_i$ for some $i \in [m].$ Since $U_i$ is value-preserving w.r.t $\tilde{\mu},$ there exists $ j\in U_i\subseteq U$ s.t. $ \tilde{\mu}(S) \leq \tilde{\mu}(S-e+f).$
\end{proof}
\section{Composable Coresets for Partition and Laminar Matroids}
\label{sec:partition-laminar}
We construct composable coresets for determinant maximization under laminar matroid constraint. 
To build intuition, we first describe composable coresets for the simpler case of partition matroid. The idea is to build a peeling coreset of suitable size for each part of the partition which define the parition matroid.

As in \cref{sec:improve exchange}, given a matroid $\matroid$ with the set of bases $\B$, we consider the problem of maximizing $ \mu(S)$ (under matroid constraint) where $\mu(S) = \sum_{W\in \binom{S}{\ell}} \nu(W) $ and $ \nu :\binom{[n]}{\ell} \to \R_{\geq 0} $ is strongly Rayleigh. 
Let $\mu_{\matroid} $ be the restriction of $\mu$ to the set of bases of $\matroid$ i.e.  $\mu_{\matroid}(S) = \mathbb{1}[S \in \B(\matroid)] \mu(S).$
\begin{definition}\label{def:partition coreset}
Consider a partition matroid $\matroid = ([n], \mathcal{I})$ defined by the partition $P_1, \cdots, P_s$ of $[n]$ and $k_1, \cdots, k_s\in \N.$ Fix constant $\zeta \geq 1.$ For $V \subseteq [n],$ the composable coreset $U$ for $V$ w.r.t. $\mu_{\matroid}$ is constructed as follows: 
\begin{itemize}
    \item When $k > \ell:$ $U$ 
    is the union of $(V\cap P_i, k_i, \zeta)$-peeling coresets for each $i \in [s],$ thus $\abs{U} = k\ell.$ 
    \item When $k = \ell: U$  is the union over $i\in [s]$ of the $\zeta$-approximate local optimum w.r.t. $\nu$ in $V\cap P_i,$ thus $\abs{U} = s\ell = sk.$ 
\end{itemize} 
\end{definition}
\begin{lemma} \label{lem:partition matroid}The coreset constructed in
\cref{def:partition coreset} has an approximation factor of $(\zeta \ell)^{2\ell}.$
\end{lemma}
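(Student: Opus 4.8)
The plan is to reduce Lemma~\ref{lem:partition matroid} to the machinery assembled in Section~\ref{sec:improve exchange}, namely Lemma~\ref{lem:value preserving to coreset} together with the composability of value-preserving subsets (Lemma~\ref{lem:composable value}) and the two building-block lemmas, Lemma~\ref{lem:non repetitive exchange} for the $k>\ell$ case and Lemma~\ref{lem:low dimension} for the $k=\ell$ case. The key point is that for a partition matroid, a basis $S$ is just a set with $|S\cap P_i|=k_i$ for each $i$; in particular the per-part constraint $|S\cap P_i|\le k_i$ is exactly the kind of cardinality constraint that the peeling coreset is designed to respect, and replacing an element $e\in S\cap P_i$ by an element $f$ in the same part $P_i$ keeps $S-e+f$ feasible.

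First I would treat the case $k>\ell$. Fix a part $P_i$ and let $U^{(i)}\subseteq V\cap P_i$ be the $(V\cap P_i,k_i,\zeta)$-peeling coreset. By Lemma~\ref{lem:non repetitive exchange} applied with ground-level set $V\cap P_i$ and threshold $k_i$, $U^{(i)}$ is value-preserving with respect to $\hat\mu_i(S)=\mathbf{1}[S\in\binom{[n]}{k}\wedge |S\cap P_i|\le k_i]\,\tilde\mu(S)$, where $\tilde\mu$ is the weighted function $\tilde\mu(S)=\sum_{W\in\binom{S}{\ell}}\phi(W)\nu(W)$ with $\phi(W)=(\zeta\ell)^{2|W\cap U|}$ and $U=\bigcup_i U^{(i)}$. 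The parts are disjoint, so Lemma~\ref{lem:composable value} gives that $U$ is value-preserving for $V=\bigcup_i(V\cap P_i)$ with respect to the function $\hat\mu$ that carries the indicator $\mathbf{1}[S\in\binom{[n]}{k}\wedge \forall i:|S\cap P_i|\le k_i]$, i.e. with respect to $\mu_{\matroid}$ up to the reweighting $\tilde\mu$. Concretely, whenever $e\in(S\cap V)\setminus U$ with $S$ a basis, $e$ lies in exactly one part $P_i$, we are in the situation $e\in(S\cap V_j)\setminus U_j$ for one of the peeled layers $U_j$ of that part's construction, and the exchange $f\in U_j$ produced by Lemma~\ref{lem:smart exchange} stays inside $P_i$, so $S-e+f$ is still a basis. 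One then invokes Lemma~\ref{lem:value preserving to coreset} with $\mu=\mu_{\matroid}$ and $\tilde\mu=\hat\mu$; Fact~\ref{fact:bound} gives $\mu_{\matroid}(S)\le\hat\mu(S)\le(\zeta\ell)^{2\ell}\mu_{\matroid}(S)$ on bases (and both vanish off the bases), yielding the claimed $(\zeta\ell)^{2\ell}$ approximation factor.

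For the case $k=\ell$, the argument is the same except that the per-part building block is the $\zeta$-approximate local optimum $U^{(i)}$ of $\mu$ restricted to $V\cap P_i$ (here $\mu=\nu$ on $\binom{[n]}{k}$ since $k=\ell$). Lemma~\ref{lem:low dimension} says $U^{(i)}$ is value-preserving for $V\cap P_i$ with respect to $\hat\mu_i(S)=(\zeta k)^{2|U\cap S|}\mu(S)$ restricted to $S$ with $|S\cap P_i|\le k_i$; again disjointness of the parts plus Lemma~\ref{lem:composable value} promotes this to $U=\bigcup_i U^{(i)}$ being value-preserving for $V$ with respect to $\mu_{\matroid}$ reweighted by $(\zeta k)^{2|U\cap S|}$, and since $|U\cap S|\le k=\ell$ the reweighting factor is between $1$ and $(\zeta\ell)^{2\ell}$, so Lemma~\ref{lem:value preserving to coreset} closes the case.

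The step I expect to require the most care is the bookkeeping in the $k>\ell$ case that makes the replacements land in the \emph{right} part and simultaneously avoid creating a multiset: one must check that the "there is an untouched peeled layer $U_j$" pigeonhole argument of Lemma~\ref{lem:non repetitive exchange} is applied \emph{per part}, using $|S\cap P_i|\le k_i$ rather than a global bound, and that after an exchange within $P_i$ the set $S$ still satisfies $|S\cap P_{i'}|=k_{i'}$ for every $i'$ (it does, since only $P_i$'s intersection changes, and it changes by swapping one element for another). Everything else is a direct citation of the Section~\ref{sec:improve exchange} lemmas, so no new calculation is needed beyond verifying that $\tilde\mu/\hat\mu$ as defined there is exactly the reweighting of $\mu_{\matroid}$ one wants, which is immediate from the definitions and Fact~\ref{fact:bound}.
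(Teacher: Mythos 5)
Your proposal matches the paper's own proof: both reduce the lemma to the combination of \cref{lem:non repetitive exchange} (for $k>\ell$) or \cref{lem:low dimension} (for $k=\ell$) applied per part $V\cap P_i$, then invoke \cref{lem:composable value} for the disjoint union and close via \cref{fact:bound} and \cref{lem:value preserving to coreset}. You spell out the (correct) bookkeeping that an exchange within $P_i$ preserves basis-hood and that the per-part indicator from \cref{lem:non repetitive exchange} upgrades to the full matroid indicator on bases, which the paper leaves implicit, but the route is the same.
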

\begin{proof}
Note that in both cases, by \cref{lem:non repetitive exchange,lem:low dimension}, $U$ is the union of value-preserving subsets $U_i$ of $ V\cap P_i$ w.r.t $ \tilde{\mu}_{\matroid}(S) = \mathbf{1}[S \in \B]\sum_{W \in \binom{S}{\ell}} (\zeta\ell)^{2\abs{W \cap U}} \nu(W).$ Thus, by \cref{lem:composable value}, $ U$ is a value-preserving subset of $V$ w.r.t $ \tilde{\mu}_{\matroid}.$ \cref{fact:bound} and \cref{lem:value preserving to coreset} together imply that $ U $ is $\ell^{2\ell}$-composable coreset w.r.t. $\mu_{\matroid}.$ 
\end{proof}
We generalize the above construction to all laminar matroids.

\begin{definition} \label{def:laminar coreset}
Consider a laminar matroid over the ground set $[n]$ defined by a laminar family $\mathcal{F}$ and the associated integers $(k_S)_{S \in \mathcal{F}}.$ Fix constant $\zeta \geq 1.$ For $V\subseteq [n],$ the coreset for $V$ is constructed as follows:
\begin{enumerate}
    \item For each maximal set $F \in \mathcal{F},$ construct a coreset $U_F \subseteq V\cap F$ by:
     \begin{itemize}
         \item Let $ D_0 = \emptyset, V_0 = V \cap F.$ For $i=1, \cdots, k_S,$ let $U_i$ be the
         $\zeta$-approximate local optimal w.r.t. $\nu$ in $ V_i = V_{i-1} \setminus D_{i-1}.$ For $e \in U_i ,$ let $F^e\in \mathcal{F}$ be the maximal \emph{proper} subset of $ F $ containing $e$ or $\set{e}$ if no such $F^e$ exists. Let $D_i:=\bigcup_{e\in U_i} F^e.$ For each $e\in U_i$ with $F^e \neq \set*{e},$ recursively construct a coreset $U_{F^e}\subseteq V\cap F^e.$ 
         Observe that if no proper subset of $F$ is inside $\mathcal{F},$ then the coreset $U_F$ is precisely the $(V\cap F, k_F)$ peeling-coreset.
         \item The coreset $U_F$ for $F$ is the union of all $U_i$ and $U_{F^e}$ for $e\in U_i.$  
     \end{itemize}
     \item The coreset $U$ of $V$ is the union of all coresets $U_F$ for maximal sets $F \in \mathcal{F}$ and all elements $e\in V$ that do not belong to any set $F \in\mathcal{F}.$ 
\end{enumerate}
\end{definition}

\begin{theorem} \label{thm:main rayleigh laminar}
Consider $ \nu: \binom{[n]}{\ell} \to\R_{\geq 0}$ that is strongly Rayleigh and $ \mu: \binom{[n]}{k} \to \R_{\geq 0}$ s.t.
\[\mu(S) = \sum_{W\in \binom{S}{\ell}} \nu(W).\]
Consider laminar matroid constraint $\mathcal{M}$ of rank $k$ defined by non-redundant family $\mathcal{F}$ with cover number $r$ i.e.  $$r: = \max_{e\in [n]} \abs{ \set*{F \in\mathcal{F}: e\in F} }.$$ \cref{def:laminar coreset} gives a $(\zeta\ell)^{2\ell}$-composable coreset w.r.t $\mu$ under matroid constraint $\matroid$ of size at most $(\zeta k\ell)^r\leq (\zeta k\ell)^k.$
\end{theorem}
\begin{proof}[Proof of the approximation factor]

Let \[\tilde{\mu}(S) = \sum_{W\in \binom{S}{\ell}} (\zeta\ell)^{2 \abs{W \cap U}} \nu(W) .\]
$U$ is a value-preserving subset of $V$ w.r.t the restriction $\tilde{\mu}_{\matroid}$ of $\tilde{\mu}$ to the set of bases of the laminar matroid i.e. $\tilde{\mu}_{\matroid}(S) = \mathbb{1}[S \in \B(\matroid)] \tilde{\mu}(S).$ This combined with \cref{lem:value preserving to coreset} immediately imply that $ U$ is a $\ell^{2\ell}$-composable coreset w.r.t $\mu_{\matroid}.$ 

We only need to show that $U_F$ is value preserving for each $F\in \mathcal{F}.$ Fix $S\in \B$ and $ h \in (S \cap V\cap F) \setminus U_F.$ We claim that there exists $f \in U_F$ s.t. $S-h +f \in B$ and $\hat{\mu}(S) \leq \hat{\mu}(S-h+f).$ 

We prove this by induction on $F.$
For the base case when $F$ has no proper subset inside $\mathcal{F},$ then $U_F$ is the $(V\cap F, k_F)$ peeling-coreset, and the claim follows from \cref{lem:non repetitive exchange}. 
If $h \in D_i$ for some $i,$ then $h$ must be contained in a proper subset $F^e\in \mathcal{F}$ of $F$ where $e\in U_i$ and $F^e \neq \set*{e}$\footnote{if $F^e = \set*{e}$ then $h =e \in U_i \subseteq U_F $, a contradiction} and we can use the induction hypothesis. Now, assume $h\not\in D_i$ for $\forall i\in [k_F].$ In particular, this means $h \in V_{k_F}\subseteq \cdots\subseteq V_1 = V \cap F$ and $D_i$ and $U_i$ are non-empty for all $i\in [k_F]. $ Note that since $D_i$'s are disjoint, and $S$ contains at most $k_F -1$ elements inside $F,$ $S \cap D_i = \emptyset$ for some $i.$ In particular, $S \cap U_i = \emptyset$ and $h \in (V_i\cap S)\setminus U_i,$ so \cref{lem:smart exchange} implies that there exists $ f \in U_i$ s.t. $ \hat{\mu}(S) \leq \hat{\mu}(S-h +f).$ Replacing $h$ with $f$ only affects the constraints for sets $F'\in \mathcal{F}$ containing $f.$ Consider such a set $F'.$ $ F'$ must be contained inside $D_i$ by the definition of $D_i,$ thus $S \cap F' = \emptyset,$ and $\abs{(S-h+f) \cap F'} \leq 1\leq k_{F'}.$ We just verify that $S-h+f$ is also a base of the laminar matroid, thus \[\hat{\mu}(S-h+f) = \tilde{\mu}(S-h+f) \geq \tilde{\mu}(S) = \hat{\mu}(S).\]
\end{proof}
\begin{proof}[Proof of upper bound on the size of the coreset]
For a set $H$ let $r_H: = \max_{e\in H} \abs{\set*{F\in \mathcal{F}: F \subseteq H \land e\in F} }.$ We show that $ \abs{U_F} \leq (k_F \ell)^{r_F}$ for each $F\in \mathcal{F}$ by induction on $r_F.$ For the base case $r_F = 1,$ we have $\abs{U_F} =k_F \ell,$ by \cref{def:threshold coreset}. Fix $F\in \mathcal{F}$ with $r_F \geq 2$ and suppose the induction hypothesis holds for $r< r_F.$ Using the definition of $U_F,$ we can bound 
\begin{align*}
  \abs{U_F} &\leq \sum_{i=1}^{k_F} \abs{U_i} + \sum_{e\in \bigcup_{i=1}^{k_F} U_i} \abs{U_{F^e}} \\
  &\leq_{(1)} k_F d + (k_F \ell) (d\max_{F'\subseteq F: F ' \in\mathcal{F} } k_{F'} )^{r_F-1} \\
  &\leq_{(2)} (k_F \ell)^{r_F}  
\end{align*}
where in (1) we use the fact that $ r_{F^e} < r_F$ since $ F^e$ is a proper subset of $F,$ and in (2) we use \[(\max_{F'\subseteq F: F ' \in\mathcal{F} } k_{F'})^{r_F-1} + 1 \leq (k_F -1)^{r_F-1} + 1\leq k_F^{r_F-1}.\]
Thus the induction hypothesis holds for all $r.$

Suppose the maximal set(s) in $\mathcal{F}$ are $ F_1, \cdots, F_t,$ and let $ R:= [n] \setminus \bigcup_{i=1}^t F_t.$ Then the rank of the laminar matroid is $k =\abs{R}+ \sum_{i=1}^t k_{F_t} ,$ and
\[\abs{U} = \abs{R \cup \sum_{i=1}^t U_{F_t}} \leq \abs{R} + \sum_{i=1}^t (k_{F_t} \ell)^{r} \leq (k\ell)^r.\]
\end{proof}
\begin{remark}
For any laminar family $\mathcal{F}$ of rank $k,$ we can construct a $d^{O(d)}$-coreset of size $ \abs{\mathcal{F}} d k$ by taking the union of all value-preserving subsets of $ V\cap (F \setminus \bigcup_{F'\subseteq F, F'\in \mathcal{F}} F')  .$ However, the size of the coreset might be as bad as linear in $n.$ Indeed, consider the laminar family defined by: $F_i = \set*{2i+1, 2i+2} , k_{F_i} = 1$ for $\forall i\in [n/2]$ and $ F_0 = [n],$ $k_{F_0} = k $ then \cref{def:laminar coreset} gives a coreset of size\footnote{We can improve the bound to $ kd^2$ by a more careful analysis.} $\leq k^2 d^2$ whereas the naive construction gives a coreset of size $\geq (n/2) d.$
\end{remark}

We immediately obtain the following corollary about determinant maximization under matroid constraints.
\begin{theorem} \label{thm:main}
For the determinant maximization matroid constraints with input vectors $v_1, \cdots, v_n \in 
\R^d,$ we obtain the following results:
\begin{enumerate}
    \item Partition matroid defined by partition $P_1, \cdots, P_s$ of $[n],$ \cref{def:partition coreset} gives:
    \begin{itemize}
    \item For $k \leq d:$ $k^{2k}$-composable coreset of size $O(sk).$ 
    \item For $k \geq d:$ $ d^{2d}$-composable coreset of size $O(kd).$
\end{itemize}
    \item Laminar matroid:
    \begin{itemize}
    \item For $k \leq d:$ $k^{2k}$-composable coreset of size $O(k^{2k}).$
    \item For $k \geq d:$ $ d^{2d}$-composable coreset of size $O((kd)^k).$
\end{itemize}
\end{enumerate}

\end{theorem}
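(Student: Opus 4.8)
The plan is to read off Theorem~\ref{thm:main} from Theorem~\ref{thm:main rayleigh laminar} (and, for the partition case, Lemma~\ref{lem:partition matroid}) by making the right choice of the auxiliary strongly Rayleigh distribution $\nu$ in each regime. When $k\le d$, take $\ell=k$; then $\binom{S}{\ell}=\{S\}$, so $\mu(S)=\nu(S)=\det(\sum_{i\in S}v_iv_i^\intercal)$, which is strongly Rayleigh by \cite{BBL09,anari2016monte}. When $k\ge d$, take $\ell=d$ and $\nu(W)=\det(\sum_{i\in W}v_iv_i^\intercal)$ on $\binom{[n]}{d}$, which is strongly Rayleigh; the Cauchy--Binet identity quoted before \cref{eq:low rank rayleigh} gives $\mu(S)=\det(\sum_{i\in S}v_iv_i^\intercal)=\sum_{W\in\binom{S}{d}}\nu(W)$, which is exactly the setting \cref{eq:low rank rayleigh} that the machinery of \cref{sec:improve exchange} and of \cref{def:partition coreset}, \cref{def:laminar coreset}, and \cref{thm:main rayleigh laminar} is built for. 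Instantiating \cref{def:partition coreset} (resp.\ \cref{def:laminar coreset}) with these parameters and with exact local optima (i.e.\ $\zeta=1$) then gives everything at once: the approximation factor is $(\zeta\ell)^{2\ell}=\ell^{2\ell}$, which is $k^{2k}$ for $k\le d$ and $d^{2d}$ for $k\ge d$; and the sizes are the ones in the two definitions --- for the partition matroid, $s\ell=sk$ in the $k=\ell$ branch and $k\ell=kd$ in the $k>\ell$ branch; for the laminar matroid, $(\zeta k\ell)^{r}\le(k\ell)^{k}$, which equals $k^{2k}$ for $k\le d$ and $(kd)^{k}$ for $k\ge d$. Thus each of the four table rows reduces to a single line once the identification of $\nu$ and $\ell$ is in place.

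The only genuinely non-mechanical point --- and the step I expect to be the main obstacle --- is a boundary issue: the exchange inequality \cref{lem:exchange inequality}, hence \cref{lem:smart exchange} and the whole \emph{value-preserving} argument, is stated for a strictly positive $\nu\colon\binom{[n]}{\ell}\to\R_{>0}$, whereas $\det(\sum_{i\in W}v_iv_i^\intercal)$ vanishes whenever the $v_i$, $i\in W$, are linearly dependent. I would dispose of this exactly as flagged earlier in the paper: use the approximation proposition to replace $\nu$ by a strictly positive strongly Rayleigh $\tilde\nu$ with $\abs{\tilde\nu(S)-\nu(S)}\le\epsilon$ for all $S$, apply Theorem~\ref{thm:main rayleigh laminar} to $\tilde\nu$, and let $\epsilon\to0$. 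Because a coreset map has finite range (it outputs a subset of $V$), one can pass to a subsequence of $\epsilon\to 0$ along which the constructed coreset is constant, and then the composable-coreset inequality --- a closed condition on the finitely many values $\mu(\cdot)$ --- survives the limit; equivalently one perturbs the input vectors so that every $d$-subset spans $\R^d$, builds the coreset, and sends the perturbation to zero. If in addition one wants the $O(n\,\poly(k))$-time construction of \cref{lem:linear time algo partition matroid}, one uses $\zeta=O(1)$-approximate local optima via \cref{remark:algorithm to find local opt}, which only weakens $\ell^{2\ell}$ to $\ell^{O(\ell)}$, i.e.\ $k^{O(k)}$ or $d^{O(d)}$.

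Beyond that I would only record the trivial identifications that make the general statements match the table entries. In the $k\le d$ partition case the construction reduces to taking a size-$k$ local optimum of $\det(\cdot)$ in each part $V\cap P_i$, and \cref{lem:low dimension} supplies value-preservation for $\hat\mu(S)=(\zeta k)^{2\abs{U\cap S}}\mu(S)$ (here $\tilde\mu$ collapses to $\hat\mu$ since $\binom{S}{\ell}=\{S\}$); in the $k\le d$ laminar case the peeling has depth $k_F$ at each node and each local optimum $U_i$ has size $\ell=k$, so the recursion $\abs{U_F}\le(k_F\ell)^{r_F}$ bottoms out at $\le(k^2)^k=k^{2k}$; in the $k\ge d$ cases all relevant local optima have size $\ell=d$, which is what yields the $kd$ and $(kd)^k$ counts. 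No new obstacle arises --- all the work lives in \cref{sec:improve exchange} and Theorem~\ref{thm:main rayleigh laminar}, and Theorem~\ref{thm:main} is the bookkeeping that specializes those bounds to the four determinant-maximization instances.
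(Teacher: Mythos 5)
Your proposal is correct and mirrors the paper's own proof: Theorem~\ref{thm:main} is obtained by instantiating $\ell=k$, $\nu(S)=\det(\sum_{i\in S}v_iv_i^\intercal)$ when $k\le d$ and $\ell=d$, $\nu(W)=\det(\sum_{i\in W}v_iv_i^\intercal)$ on $\binom{[n]}{d}$ when $k\ge d$, then specializing Lemma~\ref{lem:partition matroid} and Theorem~\ref{thm:main rayleigh laminar}. Your extra discussion of the full-support/$\nu>0$ boundary issue via the strictly real-stable perturbation is the same device the paper invokes just before Remark~\ref{remark:algorithm to find local opt}, so no new content is introduced but the bookkeeping is sound.
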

\begin{proof}
We show how to adapt the setting of $\nu:\binom{[n]}{\ell}\to\R_{\geq 0}$ and $\mu:\binom{[n]}{k} \to \R_{\geq 0}$ where $\mu(S) = \sum_{W \in \binom{S}{\ell}} \nu(W)$ with $\nu$ being strongly Rayleigh to the determinant maximization setting.
\begin{itemize}
    \item For $k \leq d:$ we let $\ell = k$ and $\mu(S) =\nu(S) = \det(\sum_{i\in S} v_i v_i^\intercal)$ for $|S| =k.$ By replacing $\ell$ with $k$ we get the stated result. 
    \item For $ k \geq d:$ we let $\ell = d,$ $\nu(W) = \det(\sum_{i\in W} v_i v_i^\intercal)$ for $\abs{W} =\ell$ and $\mu(S) =\nu(S) = \det(\sum_{i\in S} v_i v_i^\intercal)$ for $|S| =k.$
\end{itemize}
\end{proof}

Recall that $O(1)$-approximate local optima can be found in time $O(n\poly(k))$ (see \cref{remark:algorithm to find local opt}). Thus, our coreset construction is highly efficient: it takes time $O(n \poly(k))$ for the case of partition matroid constraint. As a corollary, we obtain a quasilinear algorithm for MAP-inference for DPP under partition matroid constraint.
\begin{lemma} \label{lem:linear time algo partition matroid}
Consider a partition matroid $\matroid = ([n], \mathcal{I})$ of rank $k$ defined by the partition $P_1, \cdots, P_s$ of $[n]$ and $k_1, \cdots, k_s\in \N.$ Given input vectors $v_1, \cdots, v_n\in \R^d,$ there exists a $O(n \poly(k))$ algorithm that outputs a $\min (k^{O(k)} , d^{O(d)})$-approximation for the determinant maximization under partition matroid constraint $\matroid$.  

\end{lemma}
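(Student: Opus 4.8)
The plan is to combine the composable coreset from \cref{thm:main} (specialized to partition matroids) with the induced-greedy algorithm of \cite{anari2021sampling} run on the coreset. The coreset shrinks the effective ground set from $n$ points to only $O(\min(sk, kd)) = O(k\cdot\min(s,d))$ points while losing only a $\min(k^{O(k)}, d^{O(d)})$ factor, and the expensive combinatorial optimization is then performed only on this small set, so its cost no longer depends on $n$.

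First I would invoke \cref{thm:main} to construct, in time $O(n\poly(k))$, a composable coreset $C \subseteq [n]$ for the determinant maximization problem under the partition matroid $\matroid$: for $k \le d$ this has size $O(sk)$ and approximation factor $k^{2k}$, while for $k \ge d$ it has size $O(kd)$ and approximation factor $d^{2d}$. (Here I use that a single coreset, rather than a union of $m$ coresets, is just the $m=1$ case of \cref{def:partition coreset}, and that the $O(1)$-approximate local optima needed by the construction can be computed in $O(n\poly(k))$ time by \cref{remark:algorithm to find local opt}.) By the defining property of composable coresets in \cref{sec:coreset}, $\OPT(\matroid\text{ restricted to }C)$ is within a $\min(k^{O(k)}, d^{O(d)})$ factor of $\OPT(\matroid\text{ on }[n])$, and this optimum is attained by a genuine basis $S^\star \subseteq C$ of the partition matroid. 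Note the coreset $C$ is itself partitioned by the $P_i$'s into $C \cap P_i$, so the restricted problem is again a partition-constrained determinant maximization, just on $|C| = O(k\cdot\min(s,d))$ points.

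Next I would run the induced-greedy algorithm of \cite{anari2021sampling} on the instance with ground set $C$, which returns a basis $\hat S \subseteq C$ of $\matroid$ with $\det(\sum_{i\in\hat S} v_iv_i^\intercal) \ge \min(s,d)^{-O(k)}\,\OPT(\matroid\text{ on }C)$ — i.e., a $\min(s,d)^{O(k)} = k^{O(k)}$ further approximation (using $\min(s,d) \le k$). Crucially this algorithm's running time is a function of $|C|$ and $k$ only, not of $n$; since $|C| = \poly(k)$, this step costs $\poly(k)$ time (or, keeping the dependence on the number of parts explicit, $(k\cdot\min(s,d))^{O(1)}$ plus the cost of computing the induced marginals on the coreset, which per \cite{anari2021sampling} is at most $|C|^{O(1)}$ since the partition constraint is built from a bounded number of parts). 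Multiplying the two approximation factors gives $\min(k^{O(k)}, d^{O(d)}) \cdot k^{O(k)} = \min(k^{O(k)}, d^{O(d)})$ overall, and summing the two running times gives $O(n\poly(k)) + \poly(k) = O(n\poly(k))$, as claimed.

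The main obstacle is to cleanly justify that the combinatorial optimization step on the coreset runs in time depending only on $k$ (not on $s$ in an exponential way), since the naive bound $O(|C|^{2^s}k^{2^s})$ from \cite{anari2021sampling} is exponential in $s$; the point is that after passing to the coreset $|C| = \poly(k)$, and in fact one wants to apply the deterministic induced-greedy procedure (Lemma 26 of \cite{anari2021sampling}) whose cost is polynomial in the ground-set size and in $k$ once the relevant polynomial coefficients — needed to evaluate induced marginals under the partition constraint — are computed, the latter taking $|C|^{O(1)}$ time. One also has to check that the partition constraint restricted to $C$ is still a valid partition matroid of the same rank $k$ (it is, with parts $C\cap P_i$ and the same $k_i$, after discarding empty parts), and that composability is not actually needed here — only the single-set coreset guarantee — so the argument goes through verbatim. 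I would therefore present the proof as: (i) build the coreset via \cref{thm:main}; (ii) solve the restricted problem with \cite{anari2021sampling}'s algorithm; (iii) multiply approximation factors and add running times.
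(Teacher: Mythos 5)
Your overall scheme — build the coreset via \cref{thm:main}, shrink the ground set to $\poly(k)$ points, run a known combinatorial algorithm on the coreset, and multiply approximation factors — matches the paper's proof. However, the specific algorithm you plug in at the second step is wrong and breaks both the approximation and the running-time claims.

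The paper runs the algorithm of \cite{BLPST22} on the coreset, which gives a $\min(k^{O(k)}, d^{O(d)})$-approximation for determinant maximization under matroid constraint; since the coreset itself costs a $\min(k^{O(k)}, d^{O(d)})$ factor, the product is still $\min(k^{O(k)}, d^{O(d)})$. You instead propose the induced-greedy of \cite{anari2021sampling}, which only achieves a $k^{O(k)}$-type approximation (your bound of $\min(s,d)^{O(k)}$ is itself unsubstantiated — the relevant guarantee is $(sk)^{O(k)}$, but set that aside). In the regime $k>d$, the coreset factor is $d^{O(d)}$ and the greedy factor is $k^{O(k)} \ge d^{O(d)}$, so the product is $k^{O(k)}\cdot d^{O(d)} = k^{O(k)}$, \emph{not} $d^{O(d)}$. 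Thus your chain does not deliver the claimed $\min(k^{O(k)}, d^{O(d)})$ approximation in the case $k\geq d$, which is exactly where the $d^{O(d)}$ bound is the interesting one.

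There is a second, independent problem with the running time. The induced-greedy under a partition constraint with $s$ parts requires computing induced marginals, which costs $|C|^{O(s)}$ time, not $|C|^{O(1)}$. You dismiss this by saying the constraint is ``built from a bounded number of parts,'' but $s$ is not bounded — it can be as large as $k$ (one uses $k_i\ge 1$ to see $s\le k$, but no smaller bound is available). So even on a coreset of size $O(k^2)$, the induced greedy runs in time $(k^2)^{O(k)} = k^{O(k)}$, which is not $\poly(k)$. The claimed total running time $O(n\poly(k))$ would therefore fail. Both defects disappear once you replace the induced-greedy with \cite{BLPST22}'s algorithm, as the paper does.
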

\begin{proof}
W.l.o.g. we can assume $k_i \geq 1$ for all $i.$ We construct coreset $U$ as in \cref{thm:main}.
Note that since $k_1+\cdots + k_s = k$ and $k_i \geq 1$, we have that $s \leq k$ thus the size of $U$ is $O(k^2)$ for both cases $k \leq d$ and $k\geq d.$ We can restrict the ground set to $U$ and use the existing efficient algorithms \cite{BLPST22} to get a $\min\set*{k^{O(k)}, d^{O(d)}}$-approximation for constrained determinant maximization with input vectors from $U$, which is also a $\min\set*{k^{O(k)}, d^{O(d)}}$-approximation for the original constrained determinant maximization problem.   
\end{proof}
\section{Other Experimental Design Problems} \label{sec:design}
In this section, we generalize our composable coreset construction to other experimental design problems such as $A$-design and $E$-design.

The main idea is to replace the local optimum in the coreset construction with an $\alpha$-spectral spanner (see \cref{def:spectral spanner}). 
By replacing the local optimum with a spectral spanner, we can ensure that the coreset contains a high-valued feasible fractional $x$ in the convex hull $ P(\matroid)\subseteq [0,1]^n$ of the matroid polytope of $\matroid,$ which can be rounded to an integral solution for uniform matroid constraint and certain class of laminar matroid constraint.

\begin{definition}[{\cite{indyk2020composable}}] \label{def:spectral spanner}
For a set of vectors $V\subseteq \R^d$, a subset $U\subseteq V$ is a $\alpha$-spectral spanner of $V$ iff for any $ v\in V,$ there exists a distribution $\mu_v$ of vectors in $U$ s.t.
\[v v^\intercal \preceq \alpha \E_{u \sim \mu_v} {u u^\intercal}\]
\end{definition}
\begin{theorem}[{\cite[Proposition 4.2,Lemma 4.6]{indyk2020composable}}]\label{thm:spectral spanner}
Given $V\subseteq \R^d,$
there exists an efficient algorithm that constructs $\tilde{O}(d)$-spectral spanner of size $\tilde{O}(d).$
\end{theorem}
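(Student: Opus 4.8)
The plan is to route the construction through an (approximate) John ellipsoid, equivalently the D\nobreakdash-optimal design program, and then sparsify. Restricting to $\mathrm{span}(V)$ we may assume $V$ spans $\R^d$. First I would compute weights $w\colon V\to\R_{\ge 0}$ with $\sum_{v\in V}w_v=1$ that approximately maximize $\log\det\bigl(\sum_v w_v\,vv^\intercal\bigr)$; this is a concave program and can be solved to any fixed multiplicative accuracy in polynomial time by a standard Frank--Wolfe / Khachiyan-type iteration. Writing $N=\sum_v w_v\,vv^\intercal$, its first-order optimality conditions are the John-type conditions: $v^\intercal N^{-1}v\le d(1+o(1))$ for every $v\in V$, with near-equality $u^\intercal N^{-1}u\ge d(1-o(1))$ for every $u$ in the support of $w$, and $\sum_v w_v\,v^\intercal N^{-1}v=\Tr(N^{-1}N)=d$.

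The single fact about PSD matrices I would use is $vv^\intercal\preceq(v^\intercal N^{-1}v)\,N$ for every $v$: conjugating both sides by $N^{-1/2}$ and setting $y=N^{-1/2}v$, this is just $yy^\intercal\preceq\|y\|^2 I$. Combined with the John bound, $vv^\intercal\preceq d(1+o(1))\,N=d(1+o(1))\sum_{u\in\mathrm{supp}(w)}w_u\,uu^\intercal$, so taking $\mu_v$ to be the distribution $w$ itself --- the same distribution for every $v$ --- and using $\sum_u w_u=1$ already yields a $d(1+o(1))$-spectral spanner. Its only deficiency is size: a priori $|\mathrm{supp}(w)|$ is bounded only by $\binom{d+1}{2}+1=O(d^2)$ (Carath\'eodory for $N\in\mathrm{conv}\{uu^\intercal:u\in\mathrm{supp}(w)\}\subseteq\Sd$), not by $\tilde O(d)$.

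To reduce the size to $O(d)$ I would apply Batson--Spielman--Srivastava spectral sparsification, with constant accuracy, to the decomposition $N=\sum_{u\in\mathrm{supp}(w)}w_u\,uu^\intercal$, obtaining a reweighted subset $U\subseteq\mathrm{supp}(w)$ with $|U|=O(d)$ and nonnegative weights $w'$ such that $\tfrac12 N\preceq\sum_{u\in U}w'_u\,uu^\intercal\preceq\tfrac32 N$. Then for every $v\in V$, with $W:=\sum_{u\in U}w'_u$ and $\mu_v:=w'/W$ (again the same distribution for all $v$),
\[
vv^\intercal\ \preceq\ d(1+o(1))\,N\ \preceq\ 2d(1+o(1))\sum_{u\in U}w'_u\,uu^\intercal\ =\ 2d(1+o(1))\,W\,\E_{u\sim\mu_v}[uu^\intercal].
\]
The step I expect to be the crux is checking $W=O(1)$: passing to the linear frame where $N=I$, the upper sparsification bound gives $\sum_u w'_u\|u\|^2=\Tr\bigl(\sum_u w'_u\,uu^\intercal\bigr)\le\tfrac32\Tr(I)=\tfrac32 d$, while every $u\in U\subseteq\mathrm{supp}(w)$ lies essentially on the John boundary, $\|u\|^2=u^\intercal N^{-1}u\ge d(1-o(1))$, whence $W\le\tfrac32/(1-o(1))=O(1)$. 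Altogether this produces a spectral spanner with distortion $\alpha=O(d)$ and size $O(d)$, comfortably within the claimed $\tilde O(d)$; the remaining technical care is only in running the D\nobreakdash-optimal solve and the sparsifier to small enough constant accuracy so the $o(1)$ slacks stay harmless. (If one prefers to avoid sparsification, the same bound follows by instead taking $U$ to be a volumetric spanner of $V$ of size $\tilde O(d)$ --- so every $v$ equals $\sum_{u\in U}c_u u$ with $\|c\|_2\le1$ --- and using $vv^\intercal\preceq\|c\|_1\sum_u|c_u|\,uu^\intercal\preceq\|c\|_1^2\,\E_{u\sim\mu_v}[uu^\intercal]$ with $\mu_v\propto|c_u|$ and $\|c\|_1^2\le|U|=\tilde O(d)$.)
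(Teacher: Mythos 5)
This theorem is stated in the paper as a citation to \cite[Proposition 4.2, Lemma 4.6]{indyk2020composable} and is not re-proved, so there is no in-paper argument to compare against; I will assess your proposal on its own terms. Your derivation is correct, and it is in the same John-ellipsoid / D-optimal-design spirit as the cited construction. The Kiefer--Wolfowitz condition $v^\intercal N^{-1}v\leq d$ for all $v\in V$, together with the conjugation fact $vv^\intercal\preceq(v^\intercal N^{-1}v)\,N$, gives a single mixing distribution $w$ that witnesses an $O(d)$-spectral spanner, and BSS sparsification of $N=\sum_u w_u\,uu^\intercal$ then compresses the support to $O(d)$. The one step that needs an extra sentence is the near-boundary condition $u^\intercal N^{-1}u\geq d(1-o(1))$ for every $u$ in the support, which you invoke to bound the total sparsifier weight $W$: this holds exactly at the true optimum of the D-design program, but an \emph{approximate} solver may leave support vectors far from the John boundary. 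This is repairable: solve to accuracy $o(1/d)$ and prune any support vector with $u^\intercal N^{-1}u<(1-\delta)d$ for a fixed small $\delta$; the pruned $w$-mass is $o(1/d)$, so $N$ changes by only a $1-o(1)$ factor and the $W\leq\tfrac{3/2}{1-\delta}=O(1)$ computation survives. Your parenthetical alternative via a volumetric spanner of size $\tilde O(d)$ sidesteps this weight bookkeeping entirely and is the cleaner route if one is content with $\tilde O(d)$ rather than $O(d)$ distortion.
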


Recall that the goal of experimental design problem is to select a set $S$\footnote{$S$ might need to satisfy additional constraints such as $S$ is a basis of a given matroid} that maximizes $f(\sum_{xxi\in S} v_i v_i^\intercal)$ for some objective function $f$. The most popular and well-studied objective functions include:
\begin{itemize}
    \item D(eterminant)-design: $f(A) = \det(A)^{1/d}.$
    \item A(verage)-design: $f(A) = -\Tr(A^{-1})/d$
    \item E(igen)-design: $f(A) = -\norm{A^{-1}}_2$
    \item T(race)-design: $f(A) = d/\Tr(A)$
\end{itemize}
Each of the above objective functions satisfies the properties of a \emph{regular} function  (see \cref{def:regular objective function}). \cite{roundingExperimentalDesign} shows that under uniform matroid i.e. cardinality constraint, any fractional feasible solution of a regular function can be rounded into an integral solution while incurring only $O(1)$ loss in the objective function.
For laminar matroids, \cite{lau2021local} shows the same results for $D$-design and $A$-design when $k_F \geq  C d$ for $\forall F \in \mathcal{F}$ and for some absolute constant $C.$ For general matroid and $f(\cdot) = \det(\cdot),$ \cite{Madan2020MaximizingDU} shows that a fractional feasible solution can be rounded into an integral solution while suffering a $ d^{O(d)}$ loss \emph{in expectation}.

\begin{definition} \label{def:regular objective function}
A function $f: \Sd \to \R$ is regular if it satisfies the following properties
\begin{itemize}
    \item Monotonicity: for any $A, B \in \Sd,$ if $f (A) \leq f(B)$ for $ A \preceq B.$
    \item Concavity:  for $A, B \in \Sd$ and $t\in[0,1]$, we have $f (tA + (1-t) B) \geq t f(A) + (1-t) f(B) .$
    
    In particular, this implies the existence of an efficient algorithm that solves the continuous relaxation 
    \[\max_{s_1,\cdots,s_n} f(\sum_{i\in 1}^n s_i v_i v_i^\intercal) \text{ s.t. } s_i \in [0,1] \text{ and } \sum_{i=1}^n  s_i \leq k.\]
    \item Reciprocal linearity: for any $A \in \Sd$ and $ t\in (0,1),$ $f(tA) = t^{-1} f(A).$ 
\end{itemize}
\end{definition}

\begin{theorem}[Rounding for experimental design,
{\cite{Madan2020MaximizingDU}}] \label{thm:rounding}
Consider the experimental design problem with objective function $f(\cdot)$ and input vectors $v_1, \cdots, v_n \in \R^d$ under matroid constraint $\matroid$ of rank $k.$
For any fractional $x\in P(\matroid)\subseteq [0,1]^n,$ there exists $z \in \mathcal{B}(\matroid)\subseteq \set*{0,1}^n$ s.t.
\begin{itemize}
    \item When $f(A) = \det(A):$
    \[\min\set*{d^{O(d)}, 2^{O(k)}} f(\sum_{i=1}^n z_i v_i v_i^\intercal) \geq f(\sum_{i=1}^n x_i v_i v_i^\intercal) \]
    The factor $d^{O(d)}$ can be improved to $ 2^{O(d)}$ when $\matroid$ is a partition matroid.
    \item When $k \geq d ,$ $\matroid$ is the uniform matroid and $f$ is regular:
    \[O(1) f(\sum_{i=1}^n z_i v_i v_i^\intercal) \geq f(\sum_{i=1}^n x_i v_i v_i^\intercal) \]
    \item When $ k \geq d,$ $\matroid$ is a laminar matroid defined by the laminar family $\mathcal{F}$ and $(k_F)_{F\in \mathcal{F}}$ with $ k_F \geq C d$ for $\forall F \in \mathcal{F}$ for some large absolute constant $C$, and $f(A) = -Tr(A^{-1})/d$ :
    \[O(1) f(\sum_{i=1}^n z_i v_i v_i^\intercal) \geq f(\sum_{i=1}^n x_i v_i v_i^\intercal)\]
\end{itemize}

\end{theorem}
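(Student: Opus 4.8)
\emph{The plan.} The statement bundles together three rounding guarantees, each of which is available in the literature, so my approach is to invoke the appropriate result in each regime and fill in the routine reductions; I would not reprove the rounding lemmas themselves.

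\emph{Determinant objective.} For a general matroid $\matroid$ of rank $k$ and $f(A)=\det(A)$, I would apply the rounding theorem of \cite{Madan2020MaximizingDU} directly: it rounds a point of the matroid \emph{base} polytope to an actual base while losing a $d^{O(d)}$ factor in $\det\bigl(\sum_i x_i v_i v_i^\intercal\bigr)$, via exchanges whose effect on the determinant is controlled. To obtain the alternative $2^{O(k)}$ bound one projects the vectors onto the span of an optimal basis, reducing to a low-dimensional instance ($k\le d$ in the only regime where this term is the smaller one), and applies the $e^{O(k)}$-type guarantees of \cite{nikolov2016maximizing, Madan2020MaximizingDU}; taking the better of the two gives $\min\{d^{O(d)}, 2^{O(k)}\}$. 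For partition matroids the per-exchange loss is only $O(1)$ because the exchange graph decomposes across parts, so \cite{nikolov2016maximizing} (equivalently the partition case of \cite{Madan2020MaximizingDU}) yields $2^{O(d)}$. One genuine preprocessing step is needed: the given $x$ lies in the \emph{independence} polytope $P(\matroid)$, where $\sum_i x_i$ may fall short of $k$, so I would first raise the coordinates of $x$ greedily until it reaches the base polytope; this only grows $\sum_i x_i v_i v_i^\intercal$ in the Loewner order, hence does not decrease $\det(\cdot)$ by monotonicity, and the cited base-polytope rounding then applies.

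\emph{Regular objectives, uniform matroid, $k\ge d$; and $A$-design on laminar matroids.} For the uniform-matroid case I would invoke the rounding of \cite{roundingExperimentalDesign}: any $x\in[0,1]^n$ with $\sum_i x_i\le k$ can be rounded to a $k$-subset at an $O(1)$ multiplicative cost for \emph{every} regular objective, and one only checks that their notion of regularity coincides with \cref{def:regular objective function}. For $A$-design under a laminar matroid with all bounds $k_F\ge Cd$, I would cite the local-search rounding of \cite{lau2021local}, which bounds the change of $\Tr(A^{-1})$ along a feasibility-preserving swap and produces a constant-factor guarantee; up to the normalization $f(A)=-\Tr(A^{-1})/d$ this is exactly the claimed inequality.

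\emph{Where the work is.} Since the hard rounding lemmas are imported, the only nontrivial effort is bookkeeping: verifying that the reduction to the base polytope is legitimate for each constraint class (immediate for uniform and partition matroids, and compatible with laminar structure because increasing a coordinate while remaining in $P(\matroid)$ cannot violate any laminar bound), and checking that this rounding loss composes \emph{multiplicatively} with the coreset loss of \cref{sec:improve exchange} — in particular that the $(\zeta\ell)^{2}$ reweighting in the coreset construction is charged once, to the coreset's approximation factor, and not also to the rounding factor. I expect that composition check, together with the $\sum_i x_i<k$ reduction, to be the entirety of the real work.
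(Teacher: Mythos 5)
The paper does not prove this theorem at all: it is stated as an imported result with the relevant attributions appearing in the Preliminaries, namely \cite{Madan2020MaximizingDU} for the determinant/general-matroid case, \cite{roundingExperimentalDesign} for regular objectives under cardinality constraint, and \cite{lau2021local} for $A$-design on laminar matroids with $k_F\ge Cd$, which is exactly the case-by-case attribution you give. Your additional remark about lifting $x$ from the independence polytope to the base polytope by monotonicity is a sound (and slightly more careful) piece of bookkeeping the paper omits; the last paragraph about composing with the coreset factor is outside the scope of this particular theorem but does not affect correctness.
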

We show $\tilde{O}(d)$-composable coreset of size $\tilde{O}(dk)$ for experimental design problems in the without repetition setting.
\begin{theorem}\label{thm:design}
Given input vectors $v_1, \cdots, v_n\in \R^d,$ $V \subseteq \set*{v_1, \cdots, v_n}$ and a number $k_V\geq 1,$ the $(V,k_V)$-spectral peeling coreset $U$ is defined by the same procedure as in \cref{def:threshold coreset}, but replacing the local optimal $U_i$ by a $O(d)$-spectral spanner $U_i$ of $V_i$ (see \cref{def:spectral spanner,thm:spectral spanner}). Then $\abs{U} \leq \tilde{O}(k_V d).$ For any $ S $ with $\abs{S\cap V} \leq k_V,$ there exists a distribution $ \mu_v $ for $v \in S\cap V $ with disjoint supports s.t. $ \supp(\mu_v) \subseteq U$ and for any regular objective function $f:$
\[f(\sum_{v\in S } v v^\intercal ) \leq  f(\sum_{v\in S} \E_{u\sim \mu_v} { d u u^\intercal })\]
Consequently, for $k_V = k,$ the set $ U$ serves as a $\tilde{O}(d)$-composable coreset for the experimental design problem under cardinality constraint $k$ w.r.t $f.$
\end{theorem}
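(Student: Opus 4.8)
The plan is to establish the three assertions in order: the size bound $\abs{U}\le\tilde{O}(k_Vd)$, the spectral exchange with pairwise disjoint supports, and the resulting $\tilde{O}(d)$-composable coreset. Throughout, write $\alpha=\tilde{O}(d)$ for the approximation factor of the spectral spanners of \cref{thm:spectral spanner}; this is the quantity abbreviated by ``$d$'' in the statement. The size bound is immediate: $U=\bigcup_{i=1}^{k_V}U_i$ where, by \cref{thm:spectral spanner}, each $U_i$ is a spanner of $V_i$ with $\abs{U_i}\le\tilde{O}(d)$, so $\abs{U}\le k_V\cdot\tilde{O}(d)$. Granting the exchange, the displayed inequality then follows by summing the PSD bounds $vv^\intercal\preceq\alpha\,\E_{u\sim\mu_v}{uu^\intercal}$ over $v\in S$ (taking $\mu_v=\delta_v$, the point mass, for $v\in S\setminus V$), which gives $\sum_{v\in S}vv^\intercal\preceq\sum_{v\in S}\E_{u\sim\mu_v}{\alpha uu^\intercal}$, and applying monotonicity of the regular $f$.

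The heart of the argument is constructing the $\mu_v$ with pairwise disjoint supports. Fix $S$ with $\abs{S\cap V}\le k_V$ and split $S\cap V$ into $A:=S\cap U$ and $B:=(S\cap V)\setminus U$. For $v\in A$ take $\mu_v=\delta_v$, so trivially $vv^\intercal\preceq\alpha\,\E_{u\sim\mu_v}{uu^\intercal}$. For $v\in B$, note that $v$ is never peeled, hence $v\in V_i$ for every $i\in[k_V]$. Let $I_0=\set*{i\in[k_V]:S\cap U_i=\emptyset}$; since the $U_i$ are disjoint, the number of indices $i$ with $S\cap U_i\ne\emptyset$ is at most $\abs{S\cap U}=\abs A$, so $\abs{I_0}\ge k_V-\abs A\ge\abs{S\cap V}-\abs A=\abs B$. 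Fix any injection $B\hookrightarrow I_0$, $v\mapsto i(v)$, and let $\mu_v$ be the spectral-spanner distribution of $v$ inside $U_{i(v)}$, so $\supp(\mu_v)\subseteq U_{i(v)}\subseteq U$ and $vv^\intercal\preceq\alpha\,\E_{u\sim\mu_v}{uu^\intercal}$. Disjointness: distinct $v,v'\in B$ get $i(v)\ne i(v')$ and $U_{i(v)}\cap U_{i(v')}=\emptyset$; a singleton support $\set*{v'}$ with $v'\in A\subseteq S$ cannot meet $U_{i(v)}$ because $S\cap U_{i(v)}=\emptyset$; and two distinct singletons are disjoint. This is exactly the pigeonhole of \cref{lem:non repetitive exchange}, now carried out for all of $S\cap V$ simultaneously with point masses covering the part of $S\cap V$ already inside $U$.

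For the ``consequently'' claim, take disjoint parts $V_1,\dots,V_m$ (w.l.o.g., as in the composability lemmas) with $U^{(i)}=c(V_i)$ the spectral peeling coresets for $k_V=k$, and let $S^\star\subseteq\bigcup_iV_i$, $\abs{S^\star}=k$, attain the optimum of $f(\sum_{v\in S}vv^\intercal)$. Applying the exchange to each $V_i$ with $S=S^\star$ (valid since $\abs{S^\star\cap V_i}\le k$) yields, for every $v\in S^\star\cap V_i$, a distribution $\mu_v$ supported in $U^{(i)}$; combining across $i$, using that the $U^{(i)}$ are disjoint and $S^\star=\bigsqcup_i(S^\star\cap V_i)$, the whole family $\set*{\mu_v:v\in S^\star}$ has pairwise disjoint supports inside $W:=\bigcup_iU^{(i)}$. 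Set $x_u=\sum_{v\in S^\star}\mu_v(u)$ for $u\in W$; disjointness forces $x_u\in[0,1]$, while $\sum_ux_u=\abs{S^\star}=k$, so $x$ lies in the rank-$k$ uniform matroid polytope over $W$ and satisfies $\alpha\sum_ux_u\,uu^\intercal\succeq\sum_{v\in S^\star}vv^\intercal$. By monotonicity and reciprocal linearity of the regular $f$, the fractional solution $x$ has objective value within a $\tilde{O}(d)$ factor of $\OPT$ on $\bigcup_iV_i$; finally \cref{thm:rounding} (uniform matroid, regular $f$) rounds $x$ to an integral $z\in\B$ supported on $W$ losing only an $O(1)$ factor. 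Hence $W$ contains a feasible integral solution of value within $\tilde{O}(d)$ of $\OPT$, which is precisely the defining property of a $\tilde{O}(d)$-composable coreset under cardinality constraint $k$.

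The main obstacle is the disjoint-support requirement in the second step. A single spectral spanner of $V$ already produces, for each chosen $v$, a distribution $\mu_v$ with $vv^\intercal\preceq\alpha\,\E_{u\sim\mu_v}{uu^\intercal}$; but if several of the $k$ chosen vectors route probability mass to the same spanner vector, the aggregated $x_u=\sum_v\mu_v(u)$ can exceed $1$ and leave the matroid polytope, so the rounding step no longer applies. Peeling $k_V$ spanners and using the counting bound $\abs{I_0}\ge\abs B$ is exactly what forces each coordinate of the aggregated fractional solution to stay in $[0,1]$; everything else—the size estimate, the PSD-to-$f$ step, and the final rounding—is routine given \cref{thm:spectral spanner} and \cref{thm:rounding}.
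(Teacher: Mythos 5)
Your proof is correct and follows essentially the same route as the paper's: a pigeonhole argument produces an injection from $(S\cap V)\setminus U$ to peeling indices $i$ with $S\cap U_i=\emptyset$, the spectral-spanner property of $U_i$ then yields the distributions $\mu_v$ with disjoint supports, and the resulting fractional solution in the uniform-matroid polytope is rounded via \cref{thm:rounding}. You make explicit several points the paper leaves implicit — the point masses for $v\in S\cap U$ and $v\in S\setminus V$, the counting bound $\abs{I_0}\ge\abs B$, and the check that singleton supports in $A$ cannot meet $U_{i(v)}$ — but the decomposition, the key lemmas invoked, and the overall structure are the same.
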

\begin{proof} 
Consider the set $  (S \cap V) \setminus U.$ Since $\abs{S\cap V} \leq k_V,$ there is an injective map $ \pi: (S\cap V) \setminus U \to [k_V]$ s.t. $S \cap U_{\pi(v)} = \emptyset$ for each $v \in (S\cap V) \setminus U.$ 
For each $v\in (S\cap V)\setminus U$, since $v\in V_{\pi(v)}$, we can use the fact that $U_{\pi(v)}$ is a spectral spanner of $V_i$ to deduce that there exists $\mu_v$ supported on $U_{\pi(v)}$ where 
\[v v^\intercal \preceq d\E_{u \sim \mu_v}{u u^\intercal}.\]
Note that $\mu_v$ are disjoint by injectivity of $\pi.$
The claim then follows from the monotonicity of $f.$

For sets $V_1, \cdots, V_m$ let $U'_i$ be the $(V_i,k)$-peeling coreset for $V_i.$ Let $V':= \bigcup_{i=1}^m V_i$ and $U':= \bigcup U'_i.$

Let $S\in \binom{V'}{k}$ be a subset that maximizes $ f(\sum_{v\in S} v v^\intercal)$. Using the above argument, we obtain that $U'$
 contains a fractional solution $s\in [0,1]^{U'}$ s.t. $ \sum s_i = k$ and
\[f(\sum_{v\in S} v v^\intercal ) \leq d f(\sum_{i\in U'} s_i  v_i v_i^\intercal) \leq O(d) f (\sum_{u\in \tilde{S}} u u^\intercal ) \]
for some $\tilde{S} \in \binom{U'}{k},$ where the second inequality follows from \cref{thm:rounding}.
\end{proof}

Using similar construction and proof technique, we obtain $\tilde{O}(d)$-composable coreset of size $\tilde{O} (dk)$ and $\tilde{O}((dk)^k)$ respectively for $A$-design under certain laminar and partition matroid constraint $\matroid$ where $k_F \geq Cd$ for $\forall F\in \mathcal{F}.$

\section{Lower Bounds}\label{sec:lower-bound}

In this section, we show that the coreset we constructed essentially attains the best possible size and approximation factor. 
We first show that for determinant maximization in $\R^d$ when $k \leq d$ under partition matroid constraint, our coreset size is optimal.
\begin{lemma} \label{lem:size lb in low dimension}
Suppose $ k \leq d.$ Consider a partition matroid $\matroid =([n], \mathcal{I})$ defined by a partition $P_1 \cup \cdots\cup P_s = [n]$ and constraint $k_1, \cdots ,k_s.$ Let $k : = \rank(\matroid)= \sum_{i=1}^s k_i.$  Any $\alpha$-composable coreset for the determinant maximization problem under partition matroid constraint $\matroid$ with size $ t< sk$ must incur an arbitrarily large approximation factor. 
\end{lemma}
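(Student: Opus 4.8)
The plan is to construct, for any alleged $\alpha$-composable coreset map $c$ of size $t < sk$, a collection of instances $V_1,\dots,V_m$ whose union admits a feasible basis with positive determinant, while no basis contained in $\bigcup_i c(V_i)$ is feasible at all (forcing the determinant to be $0$, hence an infinite ratio). The guiding idea is the same pigeonhole phenomenon that drives our upper bound: a coreset that preserves any finite factor must retain at least $k_i$ ``independent directions'' in every part $P_i$, and more strongly at least $k$ points in each part (since $k \le d$ and we may make the points in a part span a $k$-dimensional subspace in general position). If $c(V)$ has fewer than $sk$ points total, then for some input there is a part $P_i$ from which $c(V)$ retains fewer than $k_i$ usable points, and we can engineer the rest of the data so the only way to complete a nonzero-determinant feasible basis uses $k_i$ points of $P_i$ that the coreset threw away.

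Concretely, first I would reduce to a single-part analysis: work with $s$ ``gadget'' instances, where instance $V_i$ places $N$ points inside part $P_i$ (for $N$ a large parameter, $N \gg t$) in general position spanning a $k_i$-dimensional coordinate subspace $E_i$, with all $E_i$ mutually orthogonal, and places in the other parts $P_j$ ($j \ne i$) a fixed small set $B_j$ of exactly $k_j$ points forming an orthonormal-ish basis of $E_j$. On the union $V = \bigcup_i V_i$, an optimal feasible basis picks $k_j$ points from each part; since $\sum_j k_j = k \le d$ and the subspaces are orthogonal, it has strictly positive determinant. Now suppose $c$ has size $t < sk$. Apply $c$ to each $V_i$; since $|c(V_i)| \le t$, by averaging there is some index $i^\star$ for which $c(V_{i^\star})$ contains at most $t/s < k$ points of $P_{i^\star}$ — but that is not yet a contradiction because $k_{i^\star}$ could be much smaller than $k$. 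The real counting must be done per part and per instance simultaneously: I would instead take $m = s$ and for the $i$-th instance make part $P_i$ the ``active'' one and all others minimal, then observe that the union $\bigcup_{i} c(V_i)$ must contain, within each part $P_i$, at least $k_i$ points from $V_i$'s active copy that are in general position; summing the requirement $\sum_i k_i = k$ does not reach $sk$, so I instead need the stronger statement that for a \emph{single} instance with \emph{all} parts active, the coreset needs $k$ points per part. This is the version to target.

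So the cleanest route: build one instance where every part $P_i$ carries $N$ points in general position inside a $k$-dimensional subspace $E_i$, with the $E_i$ pairwise orthogonal and $\sum \dim E_i = sk \le \dots$ — wait, that needs $d \ge sk$, which we do \emph{not} have; we only have $k \le d$. Hence the correct construction must reuse directions: let all parts live in a common $k$-dimensional subspace $E \cong \R^k$, with the $N$ points of each part in general position in $E$. A feasible basis picks $k_i$ points from $P_i$, total $k$ points in $E$; generic choice gives nonzero $k$-volume. Now I claim $c$ must keep $k$ points of \emph{each} part in general position in $E$: if $c(V) \cap P_i$ spans a proper subspace $H_i \subsetneq E$ of dimension $< k$ (which is forced once $|c(V)\cap P_i| < k$ after perturbing to general position — actually $|c(V)\cap P_i| \le k-1$ forces $\dim \le k-1$), then by the directional-height obstruction (the $k=d$ tightness argument underlying \cref{thm:preserve k directional height} and the $k\le d$ lower bound of \cite{indyk2020composable}) there is a completion of the other parts' points — chosen after seeing $c$, using a fresh instance $V'$ disjoint from the retained part — forcing any feasible basis drawn from $c(V)\cup c(V')$ into $\mathrm{span}(H_i \cup \text{other directions})$ of dimension $< k$, hence determinant $0$. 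Thus $|c(V_i)\cap P_i| \ge k$ for every $i$, giving $|c(V)| \ge sk$, the contradiction.

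The main obstacle I anticipate is making the ``completion forces a deficient span'' step fully rigorous under the \emph{composable} (multi-instance) definition rather than the single-set definition: I must choose a second instance $V'$, whose points in part $P_i$ lie in $E \setminus H_i$ but whose points in every other part are identical to $V$'s, so that on $V \cup V'$ the true optimum still picks $k_i$ good points of $P_i$ from $V'$ (nonzero volume), while $c(V)\cup c(V')$, restricted to $P_i$, spans only $H_i + (\text{the} \le k-1\text{-dim thing } c(V') \text{ kept})$, which I need to arrange to still be proper — this requires iterating the argument $k$ times (a peeling mirror of our upper-bound construction) or choosing $V'$ adaptively so that $c(V')\cap P_i$ cannot rescue the missing direction. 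Handling this adaptivity against an arbitrary coreset map, and verifying that the orthogonal-complement directions can always be realized by points in $P_i$ while keeping all other parts fixed, is the delicate part; the volume/determinant estimates themselves are routine once the spans are pinned down.
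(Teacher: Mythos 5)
Your high-level plan (pigeonhole to find a part $P_{i^\star}$ where the coreset retains fewer than $k$ points, then engineer a second instance $V'$ so that the missing direction becomes necessary for any nonzero-determinant feasible basis) is the same as the paper's, and you correctly reject the ``orthogonal $E_i$'' detour. But there is a genuine gap in the part you yourself flag as ``delicate'': you try to place points of the deficient part $P_{i^\star}$ in $V'$ (in $E\setminus H_{i^\star}$), and then you are stuck fighting $c(V')\cap P_{i^\star}$, which might retain exactly the directions you need it to miss — hence your talk of ``iterating $k$ times'' or ``choosing $V'$ adaptively.'' This does not close, because the coreset map $c$ is arbitrary and applied after you commit to $V'$; you cannot control what $c(V')$ keeps out of $P_{i^\star}$.

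The paper avoids this entirely with one move you did not find: set $V'\cap P_{i^\star}=\emptyset$. Then every $P_{i^\star}$-vector available in $c(V)\cup c(V')$ comes from $U=c(V)$ alone, and $U\cap P_{i^\star}$ already spans a proper subspace $H$ of dimension $\le k-1$. The adversarial $V'$ then places, in the \emph{other} parts $P_j$ ($j\ne i^\star$), huge-norm vectors that all lie inside $H$ (concretely: $V\cap P_i=\{e_1,\dots,e_d\}$ for all $i$, WLOG $U\cap P_{i^\star}\subseteq\{e_1,\dots,e_{k-1}\}$, and $V'\cap P_j$ uses $M e_\ell$ with $\ell$ ranging over indices in $\{1,\dots,k-1\}$). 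The true optimum in $V\cup V'$ picks $e_k\notin H$ from $V\cap P_{i^\star}$ and collects $\sum_{j\ne i^\star}k_j$ factors of $M^2$; any feasible basis in $U\cup V'$ is either entirely inside the $(k-1)$-dimensional $H$ (determinant $0$) or swaps a norm-$M$ vector for a unit one (loses a factor $M^2$). Letting $M\to\infty$ gives the unbounded ratio. In short, the fix is not to make $V'$ supply the missing direction; it is to make $V'$ supply \emph{nothing} in $P_{i^\star}$ and make the missing direction the only escape from a degenerate subspace in which $V'$ deliberately concentrates all its weight.
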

\begin{proof}
Consider $n$ vectors $v_1, \cdots, v_n\in \R^d$ to be chosen later. For set $Q \subseteq [n],$ let $$\OPT(Q):=\max_{S\subseteq Q, |S| = k} \det(\sum_{i\in S} v_i v_i^\intercal).$$
Consider a partitioning of $[n]$ into two sets $Q, Q'$ such that for each $i,$ $\set{v_j: j\in Q\cap P_i}=\set*{e_1, \cdots, e_d}$ where $e_1, \cdots, e_d$ is the standard basis for $ \R^d.$ We need to show that for any subset $U\subseteq S$ of size $ t < sk,$ we can choose the vectors in $Q'$ s.t. $\OPT(Q \cup Q') \gg \OPT(U \cup Q').$ Indeed, fix one such subset $U$ where $|U| \leq sk-1$. Since $\sum_{i\in [s]} |U \cap P_i| \leq |U |\leq sk-1,$ there must exist $i\in [s]$ s.t. $\abs{U\cap P_i}\leq k-1. $ W.l.o.g., we can assume that $\set{v_j: j\in U\cap P_1} \subseteq \set*{e_1, \cdots, e_{k-1}}.$ Choose $Q'$ s.t. $Q' \cap P_1 = \emptyset, $ and \[ \set{v_j: j\in Q' \cap P_i} = \set*{Me_{\sum_{j=1}^{i-1} k_j }, \cdots, M e_{\sum_{j=1}^i k_j -1 }}\] for some arbitrarily large $M > 0.$  Consider $S\subseteq Q\cup Q'$ s.t. $\set{v_j :j\in S\cap P_1} = \set*{e_1, \cdots, e_{k_1-1}, e_k}$ and $\set{v_j: j\in S\cap P_i = Q'\cap P_i}$ for $i=2, \cdots, s$, then $S\in \matroid$ and $ \mu(S):=\det(\sum_{i\in S} v_i v_i^\intercal) = M^{2\sum_{j=2}^s k_j},$ thus $\OPT(Q \cup Q') \geq M^{2\sum_{j=2}^s k_j}.$ On the other hand, for any $S'\in \binom{U \cup Q'}{k},$ either:
\begin{itemize}
\item $S'\cap P_i \subseteq Q'\forall i\geq 2:$ in this case $\mu(S') = 0$\footnote{Even when we replace $\mu$ by a full-support $\tilde{\mu}$ that approximates $\mu$ within distance $\epsilon$, we will have $\tilde{\mu}(S) < \epsilon<\OPT(V\cup V')/M^2$ if we choose $\epsilon$ small enough} because all the vectors in $S'$ are contained in the $(k-1)$-dimensional subspace spanned by $e_1, \cdots, e_{k-1}.$
    \item $S'\cap P_i \not\subseteq Q'$ for some $i\geq 2:$ in this case $\mu(S') \leq M^{2 (\sum_{j=2}^s k_j-1)}$ since there are at most $\sum_{j=2}^s k_j -1 $ vectors in $S'$ that are from $V'$ and thus have norm $M,$ while the remaining vectors are from $V$ and have norm $1.$
\end{itemize}
In either case, we have $\OPT(U\cup Q') \leq M^{2(\sum_{j=2}^s k_j-1)}\leq \OPT(Q \cup Q')/M^2 ,$ thus $\OPT(Q\cup Q')$ can be arbitrarily large compared to $\OPT(U \cup Q').$
\end{proof}
For $k \geq d,$ using similar arguments, we can show that any $\alpha$-coreset for determinant maximization under partition constraint with finite approximation factor $\alpha$ must have size $ t \geq k+d(d-1).$
\begin{lemma} \label{lem:size lb in high dimension}
Suppose $k \geq d.$ Consider the partition matroid $\matroid =([n], \mathcal{I})$ of rank $k$ defined by a partition $P_1,\cdots, P_k$ 
and constraint $k_1= \cdots = k_k =1 .$ Any composable coreset for the determinant maximization problem under partition matroid constraint $\matroid$ with size $t< k + d(d-1)$ must incur an arbitrarily large approximation factor. 
\end{lemma}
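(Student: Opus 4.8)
The plan is to mirror the construction in the proof of Lemma~\ref{lem:size lb in low dimension} but calibrated to the regime $k \geq d$. I would again build an instance split into two parts $V$ (the ``data'' the coreset sees) and $V'$ (the adversarial extension), partitioned into $k$ classes $P_1,\dots,P_k$ with $k_i = 1$ for each $i$, so that a feasible basis picks exactly one vector from each class. The idea is to put in $V$, inside each class, a fixed collection of $d$ copies of the standard basis vectors $e_1,\dots,e_d$ of $\R^d$ (so $V \cap P_i = \{e_1,\dots,e_d\}$ for every $i$). For a feasible solution to have nonzero determinant we need the chosen $k$ vectors to span all of $\R^d$; equivalently at least $d$ of the classes must contribute vectors that together span $\R^d$. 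The counting target $t \ge k + d(d-1)$ is exactly the statement that the coreset must keep \emph{all} $k$ classes nonempty (that costs $k$) and, moreover, in at least $d$ of those classes it must keep a spanning set of $d$ vectors rather than fewer (each such class costing $d$ rather than $1$, i.e.\ an extra $d-1$, for $d$ classes, giving $+d(d-1)$).

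The key steps, in order, would be: (1) Show any coreset $U$ with $|U \cap P_i| = 0$ for some $i$ fails, by taking $V'$ so that $V' \cap P_i$ is a single vector of huge norm $M$ in a ``fresh'' direction while $V'\cap P_j$ for the remaining classes are scaled basis vectors; then $\OPT(V\cup V')$ uses the large-norm vector from $P_i$ but $\OPT(U\cup V')$ cannot, forcing an $\Omega(M)$ gap (take $M\to\infty$). (2) Show that if fewer than $d$ classes have $|U\cap P_i| = d$, the coreset again fails: say classes with a full copy of $\{e_1,\dots,e_d\}$ number at most $d-1$, so in at least $k-(d-1)$ classes $U$ misses some basis vector; I would choose $V'$ to force the optimal basis to use, in $d$ designated ``span-providing'' classes, precisely the basis directions, scaled by $M$, so that $U$ (missing a basis direction in all but $d-1$ classes) is short exactly one independent direction and its best feasible determinant collapses or loses a factor $M^2$. (3) Conclude: since keeping every class nonempty forces $\ge k$ elements and keeping $d$ classes with a full basis forces $\ge d(d-1)$ additional elements (the overlap being one element per full class already counted in the $k$), we get $|U| \ge k + d(d-1)$.

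The main obstacle I anticipate is \emph{bookkeeping the overlap correctly} and making step (2) airtight: I must design $V'$ so that the adversary can simultaneously (a) force which $d$ classes must be the span-providers and (b) ensure that a coreset short one basis direction genuinely cannot recover it from $V'$, while a coreset with the full construction can. Concretely, $V'$ should contribute, in $d-1$ of the span-providing classes, huge vectors $Me_1,\dots,Me_{d-1}$ (say), leaving the $e_d$ direction obtainable only from $V$; then a coreset missing $e_d$ in \emph{every} class that is not one of the (at most $d-1$) full classes cannot complete a full-rank basis, whereas the real instance can, and one also checks the non-full classes contribute low-norm vectors so the determinant gap is $M^{\Omega(1)}$. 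As in the previous lemma, a footnote handling the full-support perturbation $\tilde\mu$ (choosing $\epsilon$ small relative to $\OPT(V\cup V')/M^2$) takes care of the strongly-Rayleigh approximation subtlety. I would also double-check the edge case $k = d$ where the bound reads $t < d + d(d-1) = d^2$, which should be consistent with (and slightly stronger in combined form than) the $k\le d$ lemma at $k=d$.
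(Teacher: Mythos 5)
Your counting of the target $t \geq k + d(d-1)$ and your high-level plan (show every class of $U$ is nonempty, and show at least $d$ classes are full) match the paper's skeleton. But the base instance you propose, $V\cap P_i = \{e_1,\dots,e_d\}$ with \emph{unit} vectors in every class, does not force the bound, and the obstacle you flag in step~(2) is not just bookkeeping --- it is fatal to this choice of $V$. Here is a concrete counterexample at $d=3,\ k=4$: let $U\cap P_1 = U\cap P_2 = \{e_1,e_2,e_3\}$, $U\cap P_3 = \{e_1,e_2\}$, $U\cap P_4 = \{e_3\}$. Then $|U|=9 < 10 = k+d(d-1)$, yet for every $V'$ the ratio $\OPT(V\cup V')/\OPT(U\cup V')$ stays $O(1)$: given any basis $(a_1,a_2,a_3,a_4)$ feasible in $V\cup V'$, if $a_3=e_3$ and $a_4\in\{e_1,e_2\}$ you can swap $a_3$ and $a_4$ (same multiset, same value), and if $a_3=a_4=e_3$ then replacing one $e_3$ by $e_1$ or $e_2$ loses at most a constant factor. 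Because all classes of your $V$ are interchangeable, a coreset can legally ``spread'' the missing directions across two non-full classes and recover the span, so fewer than $d$ full classes suffice.

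The paper avoids this by \emph{grading} the norms: $V\cap P_i=\{M_ie_1,\dots,M_ie_d\}$ with $M_1> M_2>\cdots> M_d \gg M_{d+1}\geq\cdots\geq M_k$. This breaks the symmetry and singles out $P_1,\dots,P_d$ as the classes that must be full; it enables the induction (one class $P_i$ at a time, for $i=1,\dots,d$), and, crucially, it makes every vector in $P_{d+1},\dots,P_k$ tiny (norm $\leq M_{d+1}$). When $U\cap P_i$ misses direction $e_i$ for some $i\leq d$, the adversary plants $\{Me_t\}$ for $t\in\{1,\dots,d\}\setminus\{i\}$; then $V$ can finish the basis with $M_ie_i$ of norm $M_i\gg M_{d+1}$, while $U\cup V'$ can only supply $e_i$ from a class $j>d$ with norm $\leq M_{d+1}$, so the gap is $(M_i/M_{d+1})^2\to\infty$. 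This gradation is exactly the mechanism your step~(2) is missing: without it, the coreset is free to park the missing direction in whichever later class it likes.

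A smaller issue: step~(1) is wrong as stated. If $U\cap P_i=\emptyset$, the adversary should take $V'\cap P_i=\emptyset$ as well; then $(U\cup V')\cap P_i=\emptyset$ and no feasible basis of the partition matroid exists in $U\cup V'$, so $\OPT(U\cup V')=0$. Planting a huge vector in $V'\cap P_i$ does not help because $V'\subseteq U\cup V'$ too.
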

\begin{proof}
The construction is similar to the proof of \cref{lem:size lb in low dimension}. Consider $n$ vectors $v_1, \cdots, v_n\in \R^d$ to be chosen later. For set $Q \subseteq [n],$ let $\OPT(Q):=\max_{S\subseteq Q, |S| = k} \det(\sum_{i\in S} v_i v_i^\intercal).$

Choose set $Q\subseteq [n]$ and $\set{v_j:j \in Q}$ s.t. \[\set{v_j: j\in Q\cap P_i} = \set*{M_i e_1, \cdots, M_i e_d}\] 
with $M_1 \geq  M_2 \geq M_d \gg M_{d+1} \cdots \geq M_k$ to be chosen later. Let $U$ be a coreset for $Q$ with finite approximation factor. Clearly, $\abs{Q \cap P_i} \geq 1.$ 
We will show that $ \abs{U\cap P_i} = d $ for $i=1, \cdots, d,$ and thus conclude that $ \abs{U} \geq (k-d) + d^2 = k+d(d-1).$

For the base case of $i=0,$ the claim holds trivially. 
Suppose that the claim holds for $ i-1 $ with $i\geq 1$. Then we show that it holds for $i.$ We assume for contradiction that $ \abs{U\cap P_i} \leq d-1.$ W.l.o.g., assume $ \set{v_j: j\in U\cap P_i} \subseteq {M_ie_1, \cdots, M_ie_{i-1}, M_i e_{i+1}, \cdots, M_i e_{d}}.$
Indeed, define $Q'$ such that $\set{v_j: j\in Q'\cap P_t} = \set*{M e_t} $ for $t \in \set*{1, \cdots, i-1, i+1, \cdots, d}$ and $Q'\cap P_t = \emptyset$ otherwise. By choosing $ M\gg M_1$ and $M_i\gg M_d \gg M_{d+1},$ we can ensure that the optimal instance in $ Q\cup Q'$ must contain $d-1$ vectors in $Q'$ and $M_i e_i \in Q \cap P_i,$ thus \[\OPT(Q\cup Q') \geq  (M^{d-1} M_i)^2.\]
On the other hand, for any $S \subseteq U \cup Q',$ either
\begin{itemize}
    \item $\abs{S \cap Q'} \leq d-1:$ in this case \[\mu(S) \leq \binom{k}{d} (M^{d-2} M_1^2)^2 \] since any $W \in \binom{S}{d}$ must contain at most $d-2$ vectors of norm $M$ from $V',$ and the remaining vectors have norm at most $M_1.$
    \item $\abs{S \cap Q'} = d-1:$ since $\set{v_j: U\cap P_i }$ is in the span of $\set{v_j:S \cap Q'},$ any $W\in \binom{S}{d}$ with $\det(\sum_{i\in W} v_i v_i^\intercal) \neq 0$ must satisfy that $V_W :=\set{v_j:j\in W}$ consist of at most $d-1$ vectors of norm $M$ from $Q',$ and the remaining vectors must have norm at most $M_{d+1},$ thus \[\mu(S) \leq \binom{k}{d} (M^{d-1} M_{d+1})^2 .\]
\end{itemize}
In either case, $\OPT(U\cup Q')$ can be arbitrarily smaller than $ \OPT(Q\cup Q').$ 
\end{proof}

Finally, we show that for $k\geq d,$ the approximation factor of $d^{O(d)}$ is the best possible even under no constraints. For $k\leq d,$ \cite{indyk2020composable} shows that approximation factor of $k^{O(k)}$ is optimal.

The following construction is from \cite[section 7.1]{indyk2020composable}. We include this for completeness.
\begin{definition}[Hard input for composable coreset]\label{def:hard}
Let $ \beta = o(d/\log^2 d),$ $m = d/\log d$ so that $d^{d/m} = O(1).$ Consider $ G\subseteq \R^{m+1}$ of $ d^{\beta+2}$ vectors s.t. for every two vectors $p, q \in G,$ we have $\langle p ,  q\rangle \leq O(\frac{\sqrt{\beta} \log d}{\sqrt{d}}).$  

For $i =1, \cdots, d-m$, construct $X_i$ as follows: pick a random index $\pi_i\in [n].$ Embed $ G$ into the subspace spanned by $\set*{e_1, \cdots, e_m, e_{m+i}}$ s.t. the $\pi(i)^{th}$ vectors in $G$ is mapped into $e_{m+i}.$

Choose a random rotation matrix $Q,$ and return $Q X_1, \cdots,$ $Q X_{d-m}$ and $ QY_1, \cdots, QY_m$ with $ Y_i = \set*{M e_i}$ for a large enough $M.$
\end{definition}
\begin{theorem}\label{thm:lowerbound}
For $d\leq k \leq d^{\gamma}$ and $\gamma'$ s.t. $ \gamma \gamma' = o(d/\log^2 d),$ any composable coreset of size $k^{\gamma'}$ must incur an approximation of $(\frac{d}{\gamma \gamma'})^{d(1-o(1))}.$
For example, the theorem applies when $ \gamma, \gamma' $ are constant, i.e. $ d\leq k \leq \poly(d)$ and the coreset has size $\poly(k).$
\end{theorem}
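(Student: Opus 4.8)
The plan is to establish the lower bound by exhibiting an adversarial input partitioned into many pieces and showing that no small coreset can simultaneously preserve a near-optimal solution across all possible "ways the pieces could be combined." The construction is precisely \cref{def:hard}: we have $d-m$ blocks $QX_1,\dots,QX_{d-m}$, each obtained by embedding the almost-orthogonal point set $G\subseteq\R^{m+1}$ of size $d^{\beta+2}$ into a coordinate subspace of dimension $m+1$ (sharing the first $m$ coordinates and using one fresh coordinate $e_{m+i}$ per block), plus $m$ singleton blocks $QY_i=\{Me_i\}$ with $M$ huge. The optimum on the union uses the $m$ heavy vectors $Me_1,\dots,Me_m$ together with one well-chosen vector from each $X_i$ — namely a vector whose component along $e_{m+i}$ is close to $1$ — giving volume roughly $M^m\cdot\prod(\text{that component})\approx M^{m}$. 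The key quantitative point is that in any $X_i$, a composable coreset of size $k^{\gamma'}$ can retain at most $k^{\gamma'}$ of the $d^{\beta+2}$ vectors of $G$; since the vectors are near-orthogonal, only the retained ones have a large component along the "target" direction determined by the random index $\pi_i$, and a random target direction is almost orthogonal to the span of the retained few, so the best $e_{m+i}$-component available in the coreset is only about $O(\sqrt{\beta}\log d/\sqrt d)$ with good probability.

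First I would set up the optimum value carefully: show $\OPT(V)\ge M^{2m}(1-o(1))$ by picking the $m$ heavy vectors and, in each block $X_i$, the image of the $\pi_i$-th vector of $G$ (which is exactly $e_{m+i}$ by construction), so these $d$ vectors are nearly orthonormal up to the $M$ scaling — here the almost-orthogonality of $G$ controls the Gram determinant. Second, I would fix an arbitrary coreset map $c$ of size $k^{\gamma'}$ applied to the $d-m+m = d$ blocks separately (each block is a separate input set in the composability definition), and analyze $c(X_i)$. Using a union bound over $i$ and the near-orthogonality bound $\langle p,q\rangle\le O(\sqrt\beta\log d/\sqrt d)$ for $p\ne q$ in $G$, I would argue that with positive probability over the random indices $\pi_i$ and the random rotation $Q$, for every $i$ the maximum component of any vector in $c(X_i)$ along the direction that plays the role of "$e_{m+i}$ in the optimal solution" is at most $O(\sqrt\beta\log d/\sqrt d)$ — because $c(X_i)$ was chosen without knowledge of $\pi_i$ (the coreset of block $i$ depends only on block $i$, while which direction is "needed" is revealed only when the optimum on the union is computed), so the $k^{\gamma'}$ retained vectors span a subspace essentially orthogonal to a random coordinate among $d^{\beta+2}$ possibilities. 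Third, I would bound $\OPT\!\big(\bigcup_i c(X_i)\cup\bigcup_i c(Y_i)\big)$: any feasible solution must still take the $m$ heavy vectors (otherwise it loses a factor $M^2$, which we send to infinity), and then picks one vector from each coreset block contributing, along the relevant fresh coordinate, at most $O(\sqrt\beta\log d/\sqrt d)$; multiplying over the $d-m$ blocks and squaring gives $\OPT(\text{coreset})\le M^{2m}\cdot\big(O(\sqrt\beta\log d/\sqrt d)\big)^{2(d-m)}$. Taking the ratio and plugging $\beta=\gamma\gamma'\cdot\Theta(\log^2 d/\ )$-scale parameters, $m=d/\log d$, and simplifying $d^{d/m}=O(1)$, yields the claimed $\big(\tfrac{d}{\gamma\gamma'}\big)^{d(1-o(1))}$ lower bound on $\alpha$.

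The main obstacle I anticipate is the probabilistic argument in the second step: one must argue that the coreset, being chosen \emph{before} the "needed direction" $e_{m+i}$ is determined (the adversary fixes $\pi_i$ and $Q$ but the algorithm's coreset of $X_i$ must work against all ways of completing to a global instance), cannot hedge against all $d^{\beta+2}$ possible target directions with only $k^{\gamma'}\ll d^{\beta+2}$ retained points. Formally this needs: (i) a careful application of the composability definition so that the coreset of each block truly may not depend on the other blocks or on which index is distinguished; (ii) an anti-concentration / covering estimate showing that $k^{\gamma'}$ near-orthogonal unit vectors cannot all have inner product close to $1$ with more than $k^{\gamma'}$ of the $d^{\beta+2}$ vectors of $G$, combined with the random rotation $Q$ making the geometry generic; and (iii) tracking the $(1-o(1))$ exponent through the parameter choices $\beta=o(d/\log^2 d)$, $\gamma\gamma'=o(d/\log^2 d)$. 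Since \cref{def:hard} and the overall strategy are taken verbatim from \cite[Section~7.1]{indyk2020composable}, I would largely import their analysis, checking only that the regime $k\ge d$ (rather than $k\le d$) does not change the counting — it does not, because the heavy block count $m$ and the light block count $d-m$ still sum to $d$ and the extra slack $k-d$ in the cardinality is absorbed by allowing arbitrary filler choices that do not affect the determinant.
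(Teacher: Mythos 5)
Your proposal follows the same route as the paper's proof: reuse the hard instance of \cref{def:hard}, invoke \cite[Lemma 7.2]{indyk2020composable} to argue that with probability $\geq 1-1/d$ (over the random $\pi_i$'s and the rotation $Q$) no block-coreset $c(QX_i)$ retains the special direction $Qe_{m+i}$, then use the near-orthogonality of $G$ to bound $\langle \sum_{i>m} e_ie_i^\intercal, uu^\intercal\rangle \le O(\beta\log^2 d/d)$ for any coreset vector $u$, and finally compare determinants. However, there is one substantive gap and one conceptual imprecision worth flagging.

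The substantive gap is in your handling of $k>d$. You write that the slack $k-d$ "is absorbed by allowing arbitrary filler choices that do not affect the determinant," but for $k>d$ extra vectors \emph{do} affect $\det(\sum_{v\in S}vv^\intercal)$ through the Cauchy--Binet expansion $\sum_{W\in\binom{S}{d}}\det(\cdot)$, and on the coreset side the bound necessarily carries a factor $\binom{k}{d}$. For the ratio to survive, the optimum value must be shown to carry a \emph{matching} combinatorial factor. The paper does this concretely: it replaces every input vector with $t=k/d$ identical copies (the $V^{\times t}$ device, with distinct ground-set labels but the same underlying vector), so the $k$-set $S^{\times t}$ built from $\{Me_1,\dots,Me_m,e_{m+1},\dots,e_d\}$ has exactly $t^d=(k/d)^d$ nonzero $d$-subsets, each contributing $M^{2m}$, giving $\OPT\ge (k/d)^d M^{2m}$. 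Since $\binom{k}{d}\le e^d(k/d)^d$, the coreset's $\binom{k}{d}$ costs only $e^d$, which vanishes into the $d^{o(d)}$ slack. Without this (or an equally quantitative substitute) your claimed ratio is unproved.

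The conceptual imprecision is in your second step: you explain the coreset missing $Qe_{m+i}$ by saying the coreset of block $i$ "was chosen without knowledge of $\pi_i$" and that the needed direction is "revealed only when the optimum on the union is computed." In fact the needed direction $Qe_{m+i}$ is fully determined by $\pi_i$ and $Q$, both of which are already baked into the set $QX_i$ that the coreset algorithm sees. The correct reason the coreset cannot hedge is the random rotation: $QX_i$ is distributionally identical regardless of $\pi_i$, so the (fixed) coreset function effectively selects a random $d^\beta$-subset of $d^{\beta+2}$ near-indistinguishable points, and the probability it catches the one that maps to $e_{m+i}$ is at most $1/d^2$. This is precisely what \cite[Lemma 7.2]{indyk2020composable} encodes and what the paper imports directly.
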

\begin{proof}
For a set $V$ of vectors, let $ V^{\times t}$ be the set where each vector in $V$ is duplicated $t$ times. Let $ \beta = \gamma \gamma'.$
We use the construction in \cref{def:hard} where every vector is duplicated $t=k/d$ times. Let $QX_1^{\times t}, \cdots, QX_{d-m}^{\times t},$ $QY_1^{\times t}, \cdots, QY_m^{\times t}$ be the input sets. Let $S$ be s.t. $$V_S:=\set{v_j: j\in S} = \set*{Me_1, \cdots, M e _m, e_{m+1}, \cdots, e_{d}}$$ then $V_S^{\times t}$ has value $\mu(S) \geq (k/d)^d (M^m)^2. $

On the other hand, let $c(Q X_i^{\times t})$ be an arbitrary coreset of size $k^{\gamma'} \leq d^{\beta}$ for $  Q X_i^{\times t}.$ 

As observed in \cite[Lemma 7.2]{indyk2020composable}, the probability that $C_i:= c(QX_i^{\times t})$ contains $Q e_{m+i}$ is bounded by $$\abs{c(QX_i^{\times t})}/\abs{QX_i^{\times t})}\leq 1/d^2.$$ Thus, with probability $\geq 1-1/d,$ we have $ Qe_{m+i} \not \in c(QX_i^{\times t})$ for all $i \in [d-m].$ Assume that this happens. Then for any $u\in C_i$
\[\langle \sum_{i=m+1}^d e_i e_i^\intercal, u u^\intercal \rangle \leq O(\frac{\beta \log^2 d }{d})\]
thus for any $ u_1, \cdots, u_{m}$ in  \[ \mathcal{C}:=\bigcup_{i=1}^m c(QY_i^{\times t}) \cup \bigcup_{i=1}^{d-m} c(QX_i^{\times t}) ,\]
\begin{align*}
  &\det(\sum_{i=m+1}^d (M e_i)(Me_i)^\intercal +\sum_{i=1}^m u_i u_i^\intercal) \\
  \leq &M^{2m} (\max \langle \sum_{i=m+1}^d e_i e_i^\intercal, u u^\intercal \rangle)^{d-m} \\
  \leq  &M^{2m} \left(\frac{ O(\sqrt{\beta}) \log d}{d}\right)^{2(d-m)}.  
\end{align*}

Hence, with probability at least $1-1/d,$ any size-$d$ subset $W$ in $ \mathcal{C}$ has \[\det(\sum_{v\in W} v v^{\intercal}) \leq   M^{2m} \left(\frac{ O(\sqrt{\beta}) \log d}{d}\right)^{2(d-m)},\] thus by Cauchy Binet, any size-$k$ subset $S$ in $ \mathcal{C}$ has
\[\mu(S) \leq \binom{k}{d}  M^{2m} \left(\frac{ O(\sqrt{\beta}) \log d}{d}\right)^{2(d-m)}.\]
Thus the approximation factor is at least \[\frac{1}{e^d}(d/ (O(\sqrt{\beta}) \log d )^2)^{d-m} \] with $m = o(d).$

\end{proof}
\bibliographystyle{plainnat}  
\bibliography{refs}

\end{document}